\documentclass[a4paper,UKenglish,cleveref, autoref, thm-restate]{lipics-v2021}

\nolinenumbers

\title{Exploration of Always $S$-Connected Temporal Graphs} 

\newcommand{\TEXP}{\textsc{TEXP}}

\DeclareMathOperator{\dist}{dist}

\newtheorem{problem}{Problem}

\author{Duncan Adamson}{Department of Computer Science, University of St Andrews, St Andrews, UK}{}{https://orcid.org/0000-0003-3343-2435}{}
\author{Paul G Spirakis}{Department of Computer Science, University of Liverpool, Liverpool, UK}{p.spirakis@liverpool.ac.uk}{https://orcid.org/0000-0001-5396-3749}{}
\authorrunning{D. Adamson, P.G. Spirakis} 
\Copyright{Duncan Adamson and Paul Spirakis} 

\ccsdesc[500]{Mathematics of computing~Paths and connectivity problems}

\keywords{Temporal Graphs, Graph Exploration, treewidth} 

\category{} 

\relatedversion{} 

\EventEditors{John Q. Open and Joan R. Access}
\EventNoEds{2}
\EventLongTitle{42nd Conference on Very Important Topics (CVIT 2016)}
\EventShortTitle{CVIT 2016}
\EventAcronym{CVIT}
\EventYear{2016}
\EventDate{December 24--27, 2016}
\EventLocation{Little Whinging, United Kingdom}
\EventLogo{}
\SeriesVolume{42}
\ArticleNo{23}

\begin{document}

\maketitle

\begin{abstract}
\emph{Temporal graphs} are a generalisation of (static) graphs, defined by a sequence of \emph{snapshots}, each a static graph defined over a common set of vertices. \emph{Exploration} problems are one of the most fundamental and most heavily studied problems on temporal graphs, asking if a set of $m$ agents can visit every vertex in the graph, with each agent only allowed to traverse a single edge per snapshot. In this paper, we introduce and study \emph{always $S$-connected} temporal graphs, a generalisation of always connected temporal graphs where, rather than forming a single connected component in each snapshot, we have at most $\vert S \vert$ components, each defined by the connection to a single vertex in the set $S$. We use this formulation as a tool for exploring graphs admitting an \emph{$(r,b)$-division}, a partitioning of the vertex set into disconnected components, each of which is $S$-connected, where $\vert S \vert \leq b$.

We show that an always $S$-connected temporal graph with $m = \vert S \vert$ and an average degree of $\Delta$ can be explored by $m$ agents in $O(n^{1.5} m^3 \Delta^{1.5}\log^{1.5}(n))$ snapshots. Using this as a subroutine, we show that any always-connected temporal graph with treewidth at most $k$ can be explored by a single agent in $O\left(n^{4/3} k^{5.5}\log^{2.5}(n)\right)$ snapshots, improving on the current state-of-the-art for small values of $k$. Further, we show that interval graph with only a small number of large cliques can be explored by a single agent in $O\left(n^{4/3} \log^{2.5}(n)\right)$ snapshots.
\end{abstract}

\section{Introduction}

In many real-world settings, networks are not static objects but instead have unstable connections that vary over time. Such examples include public transport networks \cite{Kutner2025a} and infection control \cite{ruget2021multi}. Temporal graphs provide a model for such time-varying networks. Formally, a \emph{temporal graph $\mathcal{G}$} is a generalisation of (static) graphs containing a common vertex set $V$ and ordered sequence of $T$ \emph{snapshots} $G_1, G_2, \dots, G_T$, with the \emph{lifetime} of the graph defined by the number of snapshots, by convention denoted $T$. Unlike some models of dynamic graphs, we assume that we have full knowledge of the graph, i.e. that we are given every snapshot as part of the input.

Much work on temporal graphs has focused on the problem of \emph{exploration} of temporal graphs. In the \emph{temporal exploration problem}, denoted \TEXP, every vertex must be visited at least once by one of a set of agents. The movement of the agents is restricted by allowing only a single move per snapshot. Thus, as the graph changes, the agents' movement may become restricted. We assume our objective is to find a set of \emph{temporal walks}, a generalisation of walks accounting for the movement restrictions imposed by only traversing a single edge per snapshot, covering the vertex set, with exactly one walk per agent. We call such a set of walks an \emph{exploration schedule}.

Work on \TEXP~ was motivated by extending the well-known \textsc{Travelling Salesman} to temporal graphs by Michail and Spirakis \cite{michail2016traveling}. Since then, there has been a broad span of work covering reachability \cite{DeligkasP22,Enright2021,meeks2022reducing} and exploration \cite{Adamson2022,arrighi2023,Baguley2025,dogeas2023exploiting,erlebach2021temporal,erlebach2019two,erlebach2022exploration,erlebach2023parameterised,Kutner2025a,michail2016traveling}.
The decision version of \textsc{TEXP} in which one has to decide if at least one exploration schedule exists in a given temporal graph from a given starting vertex is an \textbf{NP}-complete problem \cite{michail2016traveling}. Indeed, this problem remains \textbf{NP}-complete even if the underlying graph has pathwidth 2 and every snapshot is a tree \cite{bodlaender2019exploring}, or if the underlying graph is a star and the exploration has to start and end at the centre of the star \cite{Akrida2021}.
In the optimisation version of this problem, we ask for the \emph{quickest} exploration of a given temporal graph, that being the schedule ending at the earliest snapshot, formally defined in Section \ref{sec:prelims}.

On the positive side, we mention the landmark paper by Erlebach et al. \cite{erlebach2021temporal}, who show that any \emph{always connected} temporal graphs (temporal graphs where each snapshot is connected) can be explored in $O(n^2)$ snapshots, where $n$ is the number of vertices. This has recently been improved by Bastide et al. \cite{Bastide2025}, who have shown that the always-connected temporal graphs with average degree $\Delta$ can be explored in $O(n^{3/2} \sqrt{\Delta \log n})$ snapshots, simplifying to $O(n^{3/2} \sqrt{\log n})$ for graphs of constant average degree, such as planar graphs, and to $O(n^{3/2} \sqrt{k \log n})$ for graphs of tree width $k$. At the same time, Baguley et al. \cite{Baguley2025} have shown that if each snapshot is a random spanning tree of the graph, then, regardless of the degree, the graph can, with high probability, be explored in the lower of $O(n^{3/2})$ or $O(m)$ snapshots, where $n$ is the number of vertices and $m$ the number of edges.

Moving to other classes of temporal graphs, Erlebach and Spooner \cite{erlebach2022exploration} showed that \emph{$k$-edge-deficient} temporal graphs, temporal graphs in which each snapshot has at most $k$-edges removed from the underlying graph, can be explored in $O(k n \log n)$ snapshots, or $O(n)$ when only a single edge is removed. If the underlying graph is a $2 \times n$ grid, then the temporal graph can be explored in $O(n \log^3 n)$ snapshots. Finally, if the underlying graph is a cycle or a cycle with a single chord, then the temporal graph can be explored in $O(n)$ snapshots, later generalised by Alamouti \cite{taghian2020exploring} to exploring temporal graphs with an underlying graph as a cycle with $k$-chords, giving an $O(k^2 k! e^k n)$ upper bound on exploration, strengthened by Adamson et al. \cite{Adamson2022} to an upper bound of $O(kn)$.

On the negative side, we have, by Erlebach et al. \cite{erlebach2021temporal}, that there exists a class of always connected temporal graphs requiring $\Omega(n^2)$ snapshots to explore, and that there exists temporal graphs with an underlying planar graph of degree at most $4$ that cannot be explored faster than in $\Omega(n \log n)$ snapshots. Further, in \cite{erlebach2022exploration}, Erlebach and Spooner showed that there exist some $k$-edge-deficient temporal graphs that cannot be explored faster than in $\Omega(n \log k)$ snapshots.

\subsection{Our Results}

The primary result of this paper is a generalisation of the work by Bastide et al. \cite{Bastide2025} for always-connected temporal graphs to \emph{$S$-connected temporal graphs}, temporal graphs such that there exists a set $S$ where every vertex is connected to at least one vertex in $S$ in each snapshot. In doing so, we use our generalisation as a tool to obtain faster exploration schedules for graphs of bounded treewidth, even when restricted to a single agent. Our primary result is the following theorem:

\begin{theorem}
    \label{thm:exploring_S_always_connected}
    Let $\mathcal{G} = G_1, G_2, \dots, G_T$ be an always $S$-connected temporal graph with lifetime $T = O\left(
    n m^3 \Delta^{1.5} \sqrt{\vert X \vert \log\left(m\vert X \vert\right)}\log(\vert X \vert)\right)$ where $m = \vert S \vert$ and $\Delta$ is the average degree of the graph. Then any subset $X \subseteq V$ of vertices can be explored by $m$ agents.
\end{theorem}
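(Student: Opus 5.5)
We follow the strategy of Bastide et al.\ for always-connected temporal graphs, which alternates between two kinds of phases. In the first kind, agents greedily reach many unvisited targets along short temporal paths. In the second, when few targets remain or all remaining targets are far apart, they are handled one at a time using the classical $O(n)$ reachability bound. The new difficulty is that each snapshot $G_t$ is not connected. Instead it is a union of at most $m$ components, each containing some $s \in S$. We therefore keep one agent per vertex of $S$. We also replace ``reach any vertex within $n$ steps'' by an $S$-version of the temporal reachability lemma: from a vertex $v$ at time $t$, within $O(nm)$ snapshots the set of vertices temporally reachable from $v$ contains some vertex of $S$. Symmetrically, some vertex of $S$ can reach any fixed target $x$ within $O(nm)$ snapshots. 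The proof is the standard growing-set argument: the reachable set $R_t$ either grows or is a union of components, and in the latter case it contains an $S$-vertex. Because there are only $m$ choices of which $S$-vertex anchors the component of a given vertex, a pigeonhole argument over $m$ consecutive windows gives the extra factor $m$. Consequently, $m$ agents that first gather at the vertices of $S$ (after an initial $O(nm)$-snapshot positioning phase) can, in every block of $O(nm)$ snapshots, visit at least one chosen target and return to cover $S$ again.

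Next comes the ``many close targets'' phase. Following Bastide et al., let $X'$ be the current set of unvisited targets. Since the average degree is $\Delta$, a BFS-ball/sampling argument shows the following: if $|X'|$ is large relative to the length bound $d$, then some agent, within a window of length $O(d)$, can reach a target using $O(d)$ steps. Alternatively, a counting argument with $\sum \deg = n\Delta$ shows that a random subset of $O(\frac{n\Delta \log(m|X|)}{d})$ targets hits every short temporal walk. In the $S$-connected setting the snapshot component containing the agent may not contain the target, so we charge every step where the agent is ``stuck'' to a switch of $S$-anchor. This costs an extra factor $m$ per anchor switch, and an extra $\Delta^{1/2}$ when we union bound over the $m$ agents and their anchors. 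Balancing $d$ against the number of remaining targets, as in Bastide et al., yields $O(\sqrt{|X|\,\Delta \log(m|X|)})$-scale walks per target. This gives a phase that halves $|X'|$ in $O(n m^3 \Delta^{1.5}\sqrt{|X|\log(m|X|)})$ snapshots. Repeating for $O(\log|X|)$ halvings gives the stated $T$.

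The final step is the bookkeeping. We sum the phase lengths over the $\log |X|$ halvings, noting that the cost of a phase grows with $\sqrt{|X'|}$, and verify that the total stays within the stated bound. We then check that the anchoring/return steps between phases add only $O(nm)$ each.

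The main obstacle is the $S$-connected analogue of the short-walk lemma. In the always-connected case, any agent can reach any target within $n$ steps. Here a target may sit in the component of a different $S$-vertex for long stretches, so the right agent must be chosen adaptively. My first attempt would be to assign targets to agents dynamically: at each snapshot, whichever agent currently shares a component with the target (through its anchor $s$) takes responsibility. I would then bound, by pigeonhole over the $m$ anchors, how often responsibility can switch before progress is made. This switching is where the $m^3$ factor would come from.
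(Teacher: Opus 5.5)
Your second paragraph is where the proof would have to happen, and it does not. The claim that one phase of $O(nm^3\Delta^{1.5}\sqrt{|X|\log(m|X|)})$ snapshots halves the unvisited set is asserted rather than derived, and your last paragraph concedes that the lemma it needs is still open. The supporting statements also fail. A random set of targets cannot ``hit every short temporal walk'', since a temporal walk need not meet $X$ at all. The factor bookkeeping (an $m$ per anchor switch, a $\Delta^{1/2}$ from a union bound over the $m$ agents) is fitted to the target bound, and a union bound over $m$ agents does not produce $\Delta^{1/2}$. Your responsibility-switching idea also aims at the wrong difficulty. Deciding which agent can fetch a \emph{single} target is already settled by your first paragraph (this is Lemma~\ref{lem:there_and_back_again}), and one target per $O(nm)$ snapshots gives only $O(nm|X|)$, which is linear rather than square-root in $|X|$. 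The missing idea is how one agent collects $\Omega\bigl(\sqrt{|X|/(\Delta\log|X|)}/m\bigr)$ targets in a single round trip that starts and ends at the \emph{same} vertex of $S$, so that the one-agent-per-$S$-vertex invariant survives.

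The paper gets this from two lemmas. Lemma~\ref{lem:spliting_X_by_go_and_return} cuts an epoch of length $P=\Theta(\Delta m^2 n)$ into $2m+1$ windows, each long enough that every target round-trips with some anchor inside it. With only $m$ anchors, each target round-trips with some single anchor in two non-adjacent windows, so it has a window strictly between them. Pigeonholing over the at most $2m\cdot m$ (window, anchor) pairs gives $X'\subseteq X$ with $|X'|\ge|X|/2m^2$, one anchor $v\in S$ and one middle window such that $v$ reaches all of $X'$ before that window and all of $X'$ reach $v$ after it. Lemma~\ref{lem:path_length} then gives, inside the middle window, one temporal walk through $\Omega\bigl(\sqrt{|X'|/(\Delta\log|X'|)}\bigr)$ vertices of $X'$. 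It splits the window into sub-windows and in each takes a small set of remaining targets from which all remaining targets are temporally reachable; this is Lemma~\ref{lem:small_set_big_impact}, the correct form of your ``sampling'' claim. That lemma is proved greedily from the recorded-vertex count of Lemma~\ref{lem:subset_point_to_point}, where $S$-connectivity costs only $m-1$ stalled vertices per snapshot. The walk is then obtained by chaining backwards through the sub-windows. Both ends of the walk lie in $X'$, so the agent on $v$ goes out, follows it and returns while every other agent waits, and the invariant holds trivially. This is also where the exponents come from. One factor $m$ comes from $|X'|\ge|X|/2m^2$ shortening the walks, giving $O(m\sqrt{\Delta|X|\log(m|X|)})$ epochs per halving, and $m^2$ comes from the epoch length. $\Delta^{1.5}$ is the $\Delta$ in $P$ times the $\sqrt{\Delta}$ in the epoch count. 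The last $O(m^2)$ targets are cleaned up by Lemma~\ref{lem:there_and_back_again}.
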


\newtheorem*{theorem_exploring_S_always_connected}{Theorem \ref{thm:exploring_S_always_connected}}

This is applied to get the following theorem for the exploration of temporal graphs with bounded treewidth.

\begin{theorem}
    \label{thm:treewidth_exploration}
    Any always-connected temporal graph whose underlying graph has treewidth $k$ can be explored in $O\left(n^{4/3} k^{5.5}\log^{2.5}(n)\right)$ snapshots.
\end{theorem}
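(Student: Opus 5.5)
We prove the statement by applying an $(r,b)$-division to the underlying graph and then using Theorem~\ref{thm:exploring_S_always_connected} on each region, with a single agent simulating the $m$ agents.

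\textbf{Step 1: divide the graph.} Graphs of treewidth $k$ have balanced separators of size $k+1$. Applying them recursively, in the style of Frederickson, gives an $(r,b)$-division of the underlying graph $G$. This partitions $V$ into $O(n/r)$ regions. Each region has $O(r)$ interior vertices and a boundary set $S_i$ of size $b = O(k\sqrt{r})$ or smaller. For treewidth, recursing until pieces have size $r$ gives boundary $O(k\log n)$ per region and $O(kn/r)$ boundary vertices in total. Since the graph is always connected, every vertex inside region $i$ is, in every snapshot, connected to some vertex of $S_i$ by a path that stays inside the region until it leaves through a boundary vertex. So the region together with $S_i$ is always $S_i$-connected with $|S_i| \le b$. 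The average degree is $O(k)$, because treewidth-$k$ graphs have at most $kn$ edges.

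\textbf{Step 2: explore each region.} Theorem~\ref{thm:exploring_S_always_connected} with $m=b$ and $X$ the interior of the region gives $b$ walks that explore the region in
\[
T_r = O\left(r\, b^3 k^{1.5}\sqrt{r\log(br)}\,\log r\right)
\]
snapshots. Here I use the local region size in place of $n$; I would check in the proof of Theorem~\ref{thm:exploring_S_always_connected} that the bound depends only on the subgraph being explored.

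A single agent cannot run $b$ walks in parallel. Instead it performs them one after another. Each walk starts from a boundary vertex of $S_i$, and between walks the agent must travel to the next starting vertex. In an always-connected temporal graph any vertex can reach any other within $n$ snapshots, and within a window where the walk is fixed, arbitrary timing can be absorbed. To keep walk $j$ valid, the agent waits until the snapshot index at which walk $j$ begins in a fresh copy of the schedule. Since the region is always $S$-connected in every window, the theorem can be re-applied on a shifted window instead of reusing the same times. This costs $O(b(T_r+n))$ per region.

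\textbf{Step 3: the main obstacle.} The $+n$ travel cost is incurred $b$ times per region, for $O(n/r)$ regions, giving $O(bn^2/r)$. Balancing this against $(n/r)\,b\,T_r$ does not give $n^{4/3}$ when travel costs $n$. I therefore expect the hard step to be cheaper travel. My first attempt is to batch the regions: collect the walks that start at the same boundary vertex across many regions, and use the $O(kn/r)$ boundary vertices as hubs. The agent then explores all boundary vertices, plus the hub-to-hub movement, with the $O(n^2)$ always-connected exploration result restricted to a set of $N = O(kn/r)$ targets. This is presumably $O(nN)$, i.e. $O(kn^2/r)$.

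The total is then roughly
\[
O\left(\frac{n}{r}\, b^4 k^{1.5} r^{1.5}\log^{1.5} n \;+\; \frac{k n^2}{r}\right).
\]
Choosing $r = n^{2/3}$ balances $n\, r^{1/2}$ against $n^2/r$, giving $n^{4/3}$ overall. With $b = O(k\log n)$ and the remaining factors collected, the polylogarithmic and $k$ factors should give $k^{5.5}\log^{2.5} n$, with the exact exponents set by how $b$ enters.

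I am least sure about the travel-batching argument. If it fails, the fallback is to charge travel between consecutive simulated walks to a shared exploration of the boundary vertices, whose size $O(kn/r)$ is small.
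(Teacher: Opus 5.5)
Your skeleton is the paper's: a division, then Theorem~\ref{thm:exploring_S_always_connected} on each region, then a single agent. As written, though, two steps fail.

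\textbf{The single-agent simulation in Step~2 does not work.} Theorem~\ref{thm:exploring_S_always_connected} only guarantees that its $b$ walks \emph{jointly} cover $X$ within one common window of snapshots. Re-applying it on a shifted window produces an unrelated set of walks with a different split of $X$. So walk $1$ of one schedule, walk $2$ of the next, and so on need not cover the region. Temporal walks are also pinned to specific snapshots, so timing cannot be ``absorbed''. The missing idea is the greedy argument behind Lemma~\ref{lem:multi_agent_to_single_agent}. In each phase of $O(T+n)$ snapshots, re-apply the multi-agent bound to the currently unvisited vertices. Travel to the start of the walk that covers at least a $1/b$ fraction of them, and follow it. Then $O(b\log n)$ phases suffice. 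This is exactly where the paper's extra $\log n$ comes from: $\log^{2.5} n=\log^{1.5} n\cdot\log n$ in Theorem~\ref{thm:r_b_one_agent}. Your final accounting omits this factor and cannot justify doing so.

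\textbf{The boundary size $b=O(k\log n)$ is too large.} Plain recursive separators give this boundary, and then $b^4$ contributes $\log^4 n$. Even with a correct simulation you get $O(n^{4/3}k^{5.5}\log^{6.5} n)$. Your closing remark that the exponents are ``set by how $b$ enters'' hides exactly this failure. You need a strict $(n^{2/3},O(k))$-division, whose boundary size does not depend on $n$. The paper uses that every treewidth-$k$ graph admits a strict $(n^{2/3},6k)$-division, generalising Lemma~4.4 of Erlebach et al. You could obtain such a division by cutting along bags of a tree decomposition. Alternatively, use Frederickson-style re-splitting of regions, with separators balanced on their boundary vertices.

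\textbf{The Step~3 obstacle does not exist.} Travel of $O(n)$ per region per phase costs $O(bn^2\log n/r)$ in total. At $r=n^{2/3}$ and $b=O(k)$ this is $O(kn^{4/3}\log n)$, inside the target. Your own final balancing uses precisely this, contradicting the claim that it cannot give $n^{4/3}$. The hub-batching scheme should therefore be dropped. It is also ill-defined, since walks in different regions live in different snapshot windows and cannot share a hub.

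With these repairs the bound is $O(nr^{1/2}b^4\Delta^{1.5}\log^{1.5} r\,\log n+bn^2\log n/r)$. Setting $r=n^{2/3}$, $b=6k$ and $\Delta=O(k)$ gives $O(n^{4/3}k^{5.5}\log^{2.5} n)$.
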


\newtheorem*{theorem_treewidth_exploration}{Theorem \ref{thm:treewidth_exploration}}

Note that for graphs of small or constant treewidth, this is a strict improvement. Further, we hope that our tools will allow for more classes of graphs to be efficiently explored via partitioning approaches such as those used in \cite{Adamson2022} and \cite{erlebach2021temporal} for the exploration of graphs of bounded treewidth. Additionally, we show that temporal grids of size $n \times m$ in which each vertex has no more than one deactivated edge per snapshot can be explored in $O(nm^2)$ snapshots.

\section{Preliminaries}
\label{sec:prelims}

We first define the notation used in this paper. Let $[i, j] = i, i + 1, \dots, j$ denote the (ordered) set of integers between some pair $i, j \in \mathbb{N}$, where $\mathbb{N}$ is the set of natural numbers. Note that if $i = j$, then $[i, j] = \{i \}$, and if $i > j$, $[i, j] = \emptyset$. For notational conciseness, we us $[i]$ as shorthand for $[1, i]$.

We define a \emph{graph} $G = (V, E)$ by a set of vertices, by convention $V = (v_1, v_2, \dots, v_n)$, and set of edges, by convention $E \subseteq V \times V$, each a tuple of vertices. Note that we may write the edge $e$ between the vertices $v_i$ and $v_j$ as either $e = (v_i, v_j)$ or $e = (v_j, v_i)$. When an edge is given explicitly as $(v_i, v_j)$, we call $v_i$ the \emph{start point} and $v_j$ the \emph{end point}. A \emph{walk} in a graph is an ordered sequence of edges $W = (v_{i_1}, v_{i_2}), (v_{i_2}, v_{i_3}), \dots, (v_{i_{m - 1}, v_{i_m}})$, where the end point the the $j^{th}$ edge is the start point of the $(j + 1)^{th}$ edge. The start point of the first edge in a given walk is the start point of the walk and, analogously, the end point of the walk is the end point of the last edge in the walk. The \emph{length} of a walk $W$, denoted $\vert W \vert$ is the number of edges in the walk. A graph $G = (V, E)$ is \emph{connected} if there exists, for every pair of vertices $v, u \in V$, a walk starting at $v$ and ending at $u$. Note that a single walk $W$ may contain multiple copies of the same edge without contradiction. On the other hand, a \emph{path} is a walk where no edge is duplicated.

The \emph{distance} between two vertices $v, u \in V$ in a graph, denoted $\dist(v,u)$ is the value such there exists some walk $W$ of length $\dist(v, u) $ with the start point $v$, the end point $u$ and $\forall W' \in \{W''$ is a walk in $G$ starting at $v$ and ending at $u \}$, $\vert W' \vert \geq \dist(v, u)$. Note that the shortest walk between any pair of vertices will be a path.

The \emph{neighbourhood} of a vertex $v$, denoted $N(v)$, is the set of vertices sharing an edge with $v$, formally $N(v) = \{u \in V \mid (v, u) \in E \}$. The \emph{degree} of a vertex $v$, denoted $\Delta(v)$ is the number of vertices in the neighbourhood of $v$, formally, $\Delta(v) = \vert N(v) \vert$.

A \emph{temporal graph} is a generalisation of a graph, herein called a \emph{static graph} whenever confusion may otherwise arise, where, rather than having a single edge set, the graph contains an ordered sequence of \emph{snapshots}, by convention $G_1, G_2, \dots, G_T$, each of which is a static graph over a shared set of vertices. We assume that $G_t = (V, E_t)$, for every $t \in [1, T]$. An edge $e$ is \emph{active} at snapshot $t$ if $e \in E_t$. In our definition, an edge $e$ may be active in any number of snapshots. The \emph{temporal neighbourhood} of a vertex $v \in V$ at snapshot $G_t$, denoted $N_t(v)$, is equal to the neighbourhood of $v$ in the static graph $G_t$.


The \emph{underlying graph} of a temporal graph $\mathcal{G} = G_1, G_2, \dots, G_T$, denoted $U(\mathcal{G})$, is the static graph $U(\mathcal{G}) = (V, \bigcup_{t \in [1, T]} E_t)$, i.e. the static graph formed with the edge set corresponding to the union of all snapshots in the graph. For simplicity, given a vertex $v$ in a temporal graph $\mathcal{G}$ we use $\Delta(v)$ to denote the neighbourhood of $v$ in the underlying graph $U(\mathcal{G})$. In general, given some property $X$ of a static graph, a temporal graph has the property $X$ if the underlying graph satisfies the property. A temporal graph \emph{always satisfies} a given property $X$ if each snapshot $G_1, G_2, \dots, G_T$ also satisfies $X$. Notably, a static graph $G = (V, E)$ is \emph{connected} if, for each pair of vertices $v_i, v_j \in V$, there exists some walk starting at $v_i$ and ending at $v_j$. Therefore, a temporal graph $\mathcal{G} = G_1, G_2, \dots, G_T$ is an \emph{always connected temporal graph} if the static graph $G_t$ is connected, for every $t \in [T]$.

In this paper, we are interested in \emph{always $S$-connected temporal graphs}, a generalisation of always-connected temporal graphs.

\begin{definition}[Always $S$-Connected Temporal Graphs]
    A temporal graph $\mathcal{G} = G_1, G_2, \dots, G_T$ is \emph{Always $S$-Connected}, for a given subset of vertices $S \subseteq V$, if, for every vertex $v \in V$ and snapshot $G_t$, there exists at least one vertex $u \in S$ such that there is a walk from $u$ to $v$ in $G_t$.
\end{definition}

Note that an always-connected temporal graph is an always $S$-connected temporal graph for any non-empty subset $S$ of the vertex set $V$.

\paragraph*{Temporal Walks and Exploration}

A \emph{temporal walk} is an set of edge-snapshot index tuples, $\mathcal{W} = ((v_{i_1}, v_{i_2}), t_1)$, $((v_{i_2}, v_{i_3}), t_2)$, $\dots$, $((v_{i_{m - 1}}, v_{i_m}), t_{m - 1})$ such that:
\begin{itemize}
    \item $(v_{i_1, i_2}), (v_{i_2}, v_{i_3}), \dots, (v_{i_{m - 1}, v_{i_m}})$ form a walk in the underlying graph $U(\mathcal{G})$,
    \item the edge $(v_{i_j}, v_{i_{j + 1}})$ is active in snapshot $t_j$, and,
    \item $1 \leq t_1 < t_2 < \dots < t_{m - 1} \leq T$.
\end{itemize}
Note that this definition means that a temporal walk cannot include multiple edges from the same snapshots, but may ``wait'' at a given vertex between edge transitions. The \emph{length} of a temporal walk $\mathcal{W}$, denoted $\vert \mathcal{W} \vert$ is the snapshot $t_{m - 1}$ associated with the final edge. We refer to the $i^{th}$ tuple of a temporal walk as the $i^{th}$ \emph{step} in the walk.

A set of walks \emph{explores} a static graph $G$ if, for every vertex $v \in V$, there exists an edge containing $v$ in at least one edge of one of the walks. Analogously, a set of temporal walks explores a temporal graph $\mathcal{G}$ if, for every vertex $v \in V$, there exists an edge containing $v$ in at least one edge of one of the walks.

\begin{problem}[Temporal Graph Exploration Problem]
    \label{prob:exploration_problem_decide}
    Given a temporal graph, $\mathcal{G}$ $=$ $G_1$, $G_2$, $\dots$, $G_T$, and a set of $k$ vertices $V' \subseteq V$, does there exist a set of $k$ temporal walks $\mathcal{W}$ each starting at a unique vertex in $V'$?
\end{problem}

In general, we assume that an exploration is undertaken by a set of $k$-\emph{agents}, each starting at some vertex in $V'$, each matched uniquely to some temporal walk. When the temporal graph is always $S$-connected, we assume that we have (at least) one agent starting on each vertex in $S$. We do so to ensure that the graph can be explored, as otherwise we may have a situation where some vertex in $S$ is never connected to any other, thus being trivially unreachable. We note that this problem is NP-complete in many, even very simple, cases \cite{bodlaender2019exploring,erlebach2021temporal}. Therefore, we focus in this paper on finding upper bounds on the number of snapshots required to guarantee that a temporal graph can be explored.

\subsection{Tools}

In this section, we briefly note some of the key tools used in this paper in deriving our main results. We note that we have slightly adapted the language of some of these lemmas to bring them inline with the terminology used within the remainder of this paper, without altering the meaning.

\begin{lemma}[Reachability, Lemma 2.1 in \cite{erlebach2021temporal}]
    \label{lem:connection_lemma}
    Let $\mathcal{G}$ be a temporal graph with a vertex set $V$. Given a pair of vertices $v, u \in V$ such that there exists a sequence of $\vert V \vert$ snapshots $G_{i_1}, G_{i_2}, \dots, G_{i_{\vert V \vert}}$ where $v$ and $u$ are connected in each snapshot, and $i_1 < i_2 < \dots < i_{\vert V \vert}$, then there exists a temporal walk from $v$ to $u$ in the temporal graph corresponding to this sequence of snapshots.
\end{lemma}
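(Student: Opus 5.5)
We prove the lemma by induction along a greedy ``reachable set'' argument. For $j \in [0, \vert V \vert]$, let $R_j$ be the set of vertices that can be reached from $v$ by a temporal walk using only the snapshots $G_{i_1}, \dots, G_{i_j}$, in this order, with waiting allowed. Set $R_0 = \{v\}$. The sets grow monotonically, $R_0 \subseteq R_1 \subseteq \dots$, since an agent may always wait. The goal is to show $u \in R_{\vert V \vert}$.

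The key step is the following claim: if $u \notin R_{j-1}$, then $\vert R_j \vert \geq \vert R_{j-1} \vert + 1$. In snapshot $G_{i_j}$, the vertices $v$ and $u$ are connected, so there is a path $P$ from $v$ to $u$ in $G_{i_j}$. This path starts in $R_{j-1}$ (since $v \in R_0 \subseteq R_{j-1}$) and ends outside it. Hence $P$ contains an edge $(x, y)$ with $x \in R_{j-1}$ and $y \notin R_{j-1}$. An agent that reaches $x$ using snapshots among $G_{i_1}, \dots, G_{i_{j-1}}$ can traverse $(x,y)$ at time $i_j$, because $i_j$ is strictly larger than all earlier indices. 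So $y \in R_j \setminus R_{j-1}$.

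Now iterate the claim. As long as $u$ has not been reached, each snapshot of the sequence adds at least one new vertex. We have $\vert R_0 \vert = 1$, and $u$ is not yet in the set, so $\vert R_{j-1} \vert \leq \vert V \vert - 1$ whenever $u \notin R_{j-1}$. Therefore after at most $\vert V \vert - 1$ snapshots the reachable set must contain $u$, since otherwise it would exceed $\vert V \vert - 1$ vertices while still excluding $u$, a contradiction. The walk witnessing $u$'s membership uses only snapshots of the given subsequence, with strictly increasing times, which is exactly the temporal walk required.

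The only delicate point is verifying that the walks defining $R_j$ are genuine temporal walks. This holds because we extend by at most one edge per snapshot, at the strictly later time $i_j$. The counting is routine and in fact gives $\vert V \vert - 1$ snapshots, one fewer than the $\vert V \vert$ assumed.
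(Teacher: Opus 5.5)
Your proof is correct. The monotone reachable sets $R_j$, the edge of the $v$--$u$ path in $G_{i_j}$ that leaves $R_{j-1}$, and the count (which indeed needs only $\vert V \vert - 1$ of the snapshots) form the standard reachable-set argument for this lemma. The paper itself imports the result from \cite{erlebach2021temporal} without giving a proof, so there is no in-paper argument that yours departs from.
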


\begin{lemma}[Multi to Single Agent Exploration, Lemma 2.2 in \cite{erlebach2021temporal}]
    \label{lem:multi_agent_to_single_agent}
    Given a class of always-connected temporal graphs that can be explored by $k$ agents in $O(T)$ snapshots, the same class can be explored by a single agent in $O((T + n)k\log n)$ snapshots.
\end{lemma}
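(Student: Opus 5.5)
We would prove Lemma~\ref{lem:multi_agent_to_single_agent} by letting the single agent simulate the $k$ agents one at a time, in rounds that halve the set of unvisited vertices. Fix an always-connected temporal graph $\mathcal{G}$ in the class, together with the $k$-agent exploration guarantee: from any starting configuration and any starting snapshot, $k$ agents can explore the graph within $cT$ further snapshots. Since the class is assumed to be closed under suffixes of the snapshot sequence, this guarantee also applies to any subset $X$ of vertices still to be explored. Lemma~\ref{lem:connection_lemma} gives the glue between simulations. In an always-connected graph every pair of vertices is connected in every snapshot, so any $n$ consecutive snapshots contain a temporal walk between any two vertices.

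\textbf{Phases.} The agent proceeds in phases. At the start of a phase, let $U$ be the set of unvisited vertices and let the current snapshot be $t_0$.
\begin{itemize}
\item Place $k$ virtual agents at the agent's current vertex $v$ and take a $k$-agent schedule $W_1,\dots,W_k$ on snapshots $t_0,\dots,t_0+cT$ that explores $U$.
\item By pigeonhole, some walk $W_j$ visits at least $|U|/k$ vertices of $U$.
\item The single agent follows $W_j$, which starts at $v$, and so visits at least $|U|/k$ new vertices.
\end{itemize}
A phase therefore costs $O(T)$ snapshots and shrinks $|U|$ by a factor $(1-1/k)$. After $O(k\log n)$ phases $U$ is empty. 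The agent's final position after $W_j$ is simply its start position for the next phase, so no $n$-snapshot reconnection step is needed here.

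\textbf{Where the $+n$ comes from.} The $k$-agent guarantee usually holds only when each agent starts at a prescribed vertex (for example one per vertex of $S$), not with all agents at $v$. In that case each phase first moves the agent from its current position to the start vertex of the chosen walk $W_j$. It waits out $n$ snapshots and uses Lemma~\ref{lem:connection_lemma} to reach that vertex. It then follows the suffix of $W_j$ from snapshot $t_0+n$. To keep this valid, we choose the schedule on snapshots starting at $t_0+n$, with the start positions fixed in advance. Each phase then costs $O(T+n)$, and $O(k\log n)$ phases give the claimed $O((T+n)k\log n)$.

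\textbf{Main obstacle.} The subtle step is the pigeonhole choice. It requires the $k$-agent schedule to cover the remaining set $U$ rather than the whole vertex set. So we must use the exploration guarantee in its "explore any target subset $X$" form, as in Theorem~\ref{thm:exploring_S_always_connected}, and we must check that restarting at an arbitrary later snapshot keeps the graph in the class. Always-connectedness is preserved under taking suffixes of the snapshot sequence, which settles that point.
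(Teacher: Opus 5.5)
Your phase argument is correct and is essentially the standard proof from \cite{erlebach2021temporal}; this paper imports Lemma~\ref{lem:multi_agent_to_single_agent} without proof. In that argument, each phase fixes a $k$-agent schedule on the snapshots starting $n$ steps later, picks by pigeonhole a walk covering at least $\vert U \vert / k$ unvisited vertices, reaches its start vertex within $n$ snapshots via Lemma~\ref{lem:connection_lemma}, and follows it, so $O(k\log n)$ phases of $O(T+n)$ snapshots suffice.

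Two small corrections. First, your ``main obstacle'' is not one, since a schedule exploring all of $V$ already covers $U$. Second, always-connectedness alone does not justify the suffix assumption; it must be part of the hypothesis. It does hold for classes defined by a monotone property of the underlying graph, such as bounded treewidth, because the underlying graph of a suffix is a subgraph of the original.
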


\subsection{Lower bounds on Exploring Always $S$-Connected Graphs}

We note first that any lower bounds for always-connected temporal graphs automatically apply to Always $S$-Connected Temporal graphs. More specifically, assuming we are exploring with $\vert S \vert$-agents we have from \cite{erlebach2021temporal}:
\begin{itemize}
    \item that there exists an always $S$-connected temporal graph requiring $\Omega((n + 1- \vert S \vert)^2)$ snapshots to explore, and,
    \item for any degree $\Delta$, there exists an always $S$-connected temporal graph with maximum degree $\Delta$ requiring $\Omega(D (n + 1 - \vert S \vert))$ snapshots to explore.
\end{itemize}

Note that both bounds are reached by using the constructions found in \cite{erlebach2021temporal} to find an always-connected temporal graph with $n + 1 - \vert S \vert$ requiring $O((n + 1- \vert S \vert)^2)$ or $O(D (n + 1 - \vert S \vert))$ snapshots, then appending an additional $\vert S \vert - 1$ vertices, none of which are connected to any other vertex during the lifetime of the graph.

\section{\texorpdfstring{Exploring Always $S$-Connected Graphs}{Exploring S-Always Connected Graphs}}
\label{sec:big_generalisations}

In this section, we provide our primary result, namely that any $S$-connected graph can be explored by $m = \vert S \vert$ agents in $O\left(nm^3\Delta^{1.5}
\log(\vert X \vert)\sqrt{\vert X \vert / \left(\log(\vert X \vert / m^2)\right)}\right)$ snapshots, where $\Delta$ is the average degree of any vertex in the underlying graph, i.e. $\Delta = \sum_{v \in V} \Delta(v) / \vert V \vert$. We note this directly generalises the results from \cite{Bastide2025}, with the key difference being the generalisation from always-connected temporal graphs to always $S$-connected temporal graphs.
We show, in Section \ref{sec:applications}, that this generalisation allows for faster exploration of several key classes of graphs, including those with bounded treewidth and interval graphs.

We assume, for the remainder of this section, that $S$ contains $m$ vertices, that the temporal graph contains $n$ vertices, and that the average degree of the underlying graph is $\Delta$. We will restate this notation within the key theorems of this section.
We begin with a small Lemma that provides an immediate tool for exploring such graphs in $O(n^2 m)$ snapshots.

\begin{lemma}
    \label{lem:there_and_back_again}
    Let $\mathcal{G} = G_1, G_2, \dots, G_T$ be an always $S$-connected temporal graph. Then, given a set of $m$ agents positioned on the vertices of $S$ and vertex $v \in V$, there is at least one agent that can reach $v$ and return to its starting location in $O(n m)$ snapshots.
\end{lemma}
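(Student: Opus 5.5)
The plan is a single pigeonhole argument over a window of $2nm$ snapshots, followed by two applications of Lemma~\ref{lem:connection_lemma}. The key point is to use one window for both directions, so that the agent that reaches $v$ is guaranteed to be connected to $v$ often enough afterwards to get back to its own start vertex. Splitting into an ``outbound'' window and a separate ``return'' window would not work: the vertex of $S$ connected to $v$ could differ between the two windows.

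First, consider the snapshots $G_1, \dots, G_{2nm}$; we assume $T \geq 2nm$, which is what the $O(nm)$ bound permits. Since $\mathcal{G}$ is always $S$-connected, in each snapshot $G_t$ with $t \in [2nm]$ there is at least one $s \in S$ with a walk from $s$ to $v$ in $G_t$. Charge each snapshot to one such $s$. This distributes $2nm$ snapshots among the $m$ vertices of $S$, so by pigeonhole some $s^* \in S$ is connected to $v$ in at least $2n$ of them. Write these as $G_{i_1}, \dots, G_{i_{2n}}$ with $i_1 < \dots < i_{2n}$.

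Second, take the agent starting at $s^*$. Since $s^*$ and $v$ are connected in each of $G_{i_1}, \dots, G_{i_n}$, and $n = \vert V \vert$, Lemma~\ref{lem:connection_lemma} gives a temporal walk from $s^*$ to $v$ using only these snapshots; the agent waits in all other snapshots. This walk ends no later than snapshot $i_n < i_{n+1}$. Connectivity in a snapshot is symmetric, so $v$ and $s^*$ are also connected in each of $G_{i_{n+1}}, \dots, G_{i_{2n}}$. Applying Lemma~\ref{lem:connection_lemma} again to this later subsequence gives a temporal walk from $v$ back to $s^*$. Concatenating the two walks is valid because every snapshot index in the second walk exceeds every index in the first. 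The agent therefore visits $v$ and returns to $s^*$ by snapshot $i_{2n} \leq 2nm = O(nm)$.

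The only delicate step is choosing a single $s^*$ that serves both directions; charging the whole $2nm$ window at once, as above, settles it. The rest is routine. One small check is that Lemma~\ref{lem:connection_lemma} applies to any increasing subsequence of snapshots, with the walk simply waiting through the snapshots that are skipped, which is how the lemma is stated.
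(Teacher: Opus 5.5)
Your proof is correct and follows the same route as the paper. A pigeonhole argument over a window of $2nm$ snapshots gives a single $s^* \in S$ connected to $v$ in at least $2n$ of them, and Lemma~\ref{lem:connection_lemma} then handles both the outbound and return trips, with your version making explicit (as the paper's terse proof does not) that the first $n$ of these snapshots are used to reach $v$ and the last $n$ to return.
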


\begin{proof}
    Observe that in each snapshot, there exists some path of length at most $n$ from $v$ to the starting location of at least one agent. As there are $m$ possible starting positions, over any set of $2nm$ snapshots, there must exist some starting vertex $u \in S$ that has a walk of length at most $n$ to $v$ in $2nm$ snapshots, and thus, by Lemma \ref{lem:connection_lemma}, the agent starting at $v$ can return to $u$ in $O(n m)$ snapshots.
\end{proof}

Note that Lemma \ref{lem:there_and_back_again} allows for an exploration of any always $S$-connected temporal graph in $O(n^2 m)$ snapshots by exploring each vertex, one after another, each requiring $O(n m)$ snapshots, with the key property that, after exploring a given vertex, the agent returns to its starting vertex in $S$, allowing the repeated application of the lemma. We now provide our key combinatorial tools for expediting this, based on those provided in \cite{Bastide2025}, for exploring always $S$-connected temporal graphs with a fixed number of components defined by a subset of vertices.

\begin{lemma}
    \label{lem:subset_point_to_point}
    Let $\mathcal{G} = G_1, G_2, \dots, G_T$ be an always $S$-connected temporal graph with average degree $\Delta$ and $T \geq 2\Delta n / (\vert X \vert - (m - 1)) + 1$. Then, given any subset of vertices $X \subset V$ where $\vert X \vert \geq m + 1$, there exists at least one pair of vertices $v, u \in X$ such that there is a temporal walk starting at $v$ and ending at $u$ within the lifetime of the graph.
\end{lemma}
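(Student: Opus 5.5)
We adapt the counting argument of Bastide et al.\ to the $S$-connected setting. The idea is to argue in each snapshot about a sparse spanning structure, and then pigeonhole over time until some pair in $X$ becomes joined by a temporal walk.

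\textbf{Step 1: a spanning forest per snapshot.} Fix a snapshot $G_t$. Since $G_t$ is $S$-connected, it has at most $m$ connected components, each containing a vertex of $S$. Take a spanning forest $F_t$ of $G_t$ with at most $m$ trees. Every tree $F_t$ contains at least one pair of vertices of $X$ lies in some component of $F_t$, so at least $|X|-m \ge |X|-(m-1)-1$ elements of $X$ share a tree with another element of $X$. More usefully, we can pick vertex-disjoint paths inside the trees that pair up elements of $X$. Root each tree, and repeatedly take the deepest subtree containing at least two unmatched $X$-vertices. This yields at least $(|X|-m)/2$ pairs $\{x,y\}\subseteq X$ joined by edge-disjoint (indeed disjoint) tree paths.

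\textbf{Step 2: move tokens along one edge per snapshot.} Following Bastide et al., place a token on every vertex of $X$. In each snapshot we try to advance tokens one edge along paths of $F_t$ toward their partners, merging two tokens when they meet. A merge certifies a temporal walk between two vertices of $X$: concatenate the walk of one token with the reversal of the other's. Strictly, reversal is not allowed in temporal walks, so instead we should phrase this as sets $R_t(x)$ of vertices reachable from $x$ by time $t$. Then we need only one $x$ whose reachable set contains another $u\in X$. So we define the potential $\Phi_t=\sum_{x\in X}|R_t(x)|$.

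\textbf{Step 3: potential growth.} If no $R_t(x)$ contains another element of $X$, consider $G_t$. At most $m$ of the sets $R_t(x)$ can be closed under the edges of $G_t$, since a set closed in $G_t$ must be a union of components, and each component contains at most one $X$-vertex not yet reached by another. Moreover, closed sets in distinct components are disjoint. So at least $|X|-(m-1)$ of the sets gain a new vertex, giving $\Phi_{t+1}\ge \Phi_t+|X|-(m-1)$.

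On the other hand, $\Phi$ is bounded in terms of edges. The plan is to charge each increment to an underlying edge $(a,b)$ crossing the boundary of $R_t(x)$ and to argue each (edge, direction) is charged boundedly often. This gives a total bound of $\Phi\le \sum_v \Delta(v)=\Delta n$, up to the factor $2$. Hence after $2\Delta n/(|X|-(m-1))+1$ snapshots we reach a contradiction, so some $R(x)$ must contain another $u\in X$.

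\textbf{Main obstacle.} The delicate part is the upper bound on total potential growth by $O(\Delta n)$ rather than $|X|n$. Naively each $R(x)$ can grow to size $n$. The fix, mirroring Bastide et al., is to stop tracking $R(x)$ once it would contain another $X$-vertex, since that is exactly the success event. Then, for any vertex $v$, the sets containing $v$ must have entered through distinct neighbours of $v$, or else two of them would merge paths. This bounds the number of sets containing $v$ by $\Delta(v)+1$, and summing over $v$ gives the $2\Delta n$ numerator. I would verify this injectivity claim carefully first.
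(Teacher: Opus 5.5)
\textbf{There is a genuine gap, and it is the injectivity claim you flagged.} With forward reachable sets alone, two sets $R_t(x_1)$ and $R_t(x_2)$ that both contain $w$ give no temporal walk between $x_1$ and $x_2$, even if both entered through the same neighbour of $w$. Such a walk would need one of the two paths traversed backwards, which is the obstruction you already noticed in Step 2.

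Here is a counterexample. Take a spider with centre $a$, legs $a\,p_i\,x_i$ for $i\in[k]$, and a pendant vertex $w$ attached to $a$. Let every snapshot be the whole tree (so $m=1$), and set $X=\{x_1,\dots,x_k\}$ and $T=3$.
\begin{itemize}
\item No two $x_i$ are joined by a temporal walk, since they are at distance $4$.
\item Yet $R_3(x_i)=\{x_i,a,w,p_1,\dots,p_k\}$ for every $i$. So the degree-one vertex $w$ lies in all $k$ sets, all entered via $a$.
\item Hence $\Phi_3=k(k+3)$, while $2\Delta n=2(4k+2)$.
\end{itemize}
So under the no-success assumption a forward-only potential can be $\Theta(n^2)$. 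No bound $\Phi=O(\Delta n)$, and no ``each directed edge is charged boundedly often'' claim, can hold; the directed edge $(a,w)$ is charged by all $k$ sources. The best a forward-only count gives is the trivial $\Phi\le |X|n$. Stopping a set once it contains another $X$-vertex does not help, since here none ever does.

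\textbf{What is missing is the backward sets.} The paper, following Bastide et al., also uses $B_t(x)$, the set of vertices that can reach $x$ using $G_t,\dots,G_T$.
\begin{itemize}
\item In snapshot $t$, each $x$ records a vertex $w$ of $G_t$ lying outside $F_{t-1}(x)\cap B_{t+1}(x)$ but adjacent to it. This is possible unless that set is a union of components of $G_t$, which by your Step 3 reasoning happens for at most $m-1$ members of $X$. Any recorded $w$ then satisfies $w\in F_t(x)\cap B_t(x)$.
\item If $w$ is recorded at times $t_1<t_2$ by $x_1\neq x_2$, then $x_1$ reaches $w$ by time $t_1$ and $w$ reaches $x_2$ from time $t_2$. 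This concatenates into a genuine temporal walk $x_1\to x_2$.
\item A fixed pair $(x,w)$ is recorded at most twice.
\item Distinct $x$'s recording $w$ in the same snapshot must use distinct neighbours of $w$. A shared neighbour would lie in $F_{t-1}(x_1)\cap B_{t+1}(x_2)$.
\end{itemize}
So without success each $w$ is recorded at most $2\Delta(w)$ times, i.e.\ at most $2\Delta n$ records in total. But at least $T(\vert X\vert-(m-1))>2\Delta n$ records are made, which is the contradiction. The two-sided sets are exactly what make ``meeting at $w$'' certify a walk between two members of $X$.
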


\begin{proof}
    For brevity, let $S = \{s_1, s_2,\dots, s_m\}$, and let $C_{i, t}$ denote the set of vertices in $G_t$ connected to $s_{i}$. Note that in each snapshot either $C_{i, t} = C_{j, t}$, if $s_i$ is connected to $s_j$, or $C_{i, t} \cap C_{j, t} = \emptyset$ otherwise. We introduce the following sets:
    \begin{itemize}
        \item $F_{i,t}(v)$ (for \textbf{F}orwards) as the set of vertices in component $C_{i, t}$ that can be reached from vertex $v$ by a temporal walk starting in the first snapshot.
        \item $B_{i,t}(v)$ (for \textbf{B}ackwards) as the set of vertices in component $C_{i,t}$ that can reach reach $v$ by temporal walk stating at each vertex in snapshot $t$ by snapshot $T$.
    \end{itemize}
    As a base case, $F_{{i,1}}(v) = \left(\{v\} \cup N_1(v)\right) \cap C_{i,1}$ and $B_{{i, T}} =\left( \{v\}\cup N_T(v)\right) \cap C_{i, T}$. Informally, the only vertices that can be reached in the first snapshot from $v$ are those in the neighbourhood of $v$ in that snapshot, alongside $v$ itself. Equivalently, $v$ can only be reached by snapshot $T$ in snapshot $G_T$ by those vertices in the neighbourhood of $v$ in $G_T$, alongside $v$ itself.

    We now consider the size of the union of the sets $F_{{i,t}}(v)$. First, observe that the vertex $u$ is added to $F_{{i,t}}(v)$ if $u \in C_{i,t}$ and at least one of the following holds:
    \begin{itemize}
        \item $u \in \bigcup_{j \in [m]} F_{{j, t - 1}}(v)$, or,
        \item $\exists u' \in \bigcup_{j \in [m]} F_{{j, t - 1}}(v)$ such that $(u', u) \in E_t$.
    \end{itemize}

    Observe that, as any vertex that can be reached by snapshot $t - 1$ by temporal walk from $v$ starting at snapshot $1$ can also be reached by temporal walk from $v$ starting at snapshot $1$ by snapshot $t$, $\bigcup_{i \in [m]} F_{{i,t - 1}}(v) \subseteq \bigcup_{i \in [m]} F_{{i,t}}(v)$ and thus $\left\vert \bigcup_{i \in [m]} F_{{i,t}}(v)\right\vert \geq \left\vert \bigcup_{i \in [m]} F_{{i,t - 1}}(v)\right\vert$. Further, $\bigcup_{i \in [m]} F_{{i,t - 1}}(v) = \bigcup_{i \in [m]} F_{{i,t}}(v)$ if and only if, for every component $C_{i, t}$, either $C_{i, t} = F_{{i,t}}(v)$ or $ F_{{i,t}}(v) = \emptyset$ as otherwise there exists some vertex $u \in F_{{i,t}}(v)$ connected to some $u' \in C_{i,t} \setminus F_{{i,t}}(v)$, and thus, $u'$ can be added to $\bigcup_{i \in [m]} F_{{i,t}}(v)$ in snapshot $t$, contradicting the assumption that $\bigcup_{i \in [m]} F_{{i,t - 1}}(v) = \bigcup_{i \in [m]} F_{{i,t}}(v)$. Therefore, if by snapshot $t$ there does not exist any pair $v, u \in X$ such that there exists a walk from $v$ to $u$, then the maximum number of vertices in the set $X' \subseteq X$ where $\forall v \in X'$, $\left\vert \bigcup_{i \in [m]} F_{{i, (t - 1)}}(v) \right\vert = \left\vert \bigcup_{i \in [m]} F_{{i,t}}(v) \right\vert$ is $m - 1$, corresponding to the case where at most one vertex is in $C_{i,t}$ for $m - 1$ components, and the remainder are in a single common component.
    Hence, $\sum_{v \in X} \left\vert\bigcup_{i \in [m]} F_{{i, t}}(v) \right\vert \geq \sum_{v\in X}\left\vert\bigcup_{i \in [m]} F_{{i, t}}(v) \right\vert + \vert X \vert - (m - 1)$, for any snapshot $G_t$ by which no pair of vertices in $X$ are connected by a temporal path.

    In the other direction, observe that the vertex $u$ is added to $B_{{i,t}}(v)$ if $u \in C_{i,t}$ and at least one of the following hold:
    \begin{itemize}
        \item $u \in \bigcup_{i \in [m]} B_{{j, t + 1}}(v)$, or,
        \item $\exists u' \in \bigcup_{i \in [m]} B_{{i, t + 1}}(v)$ such that $(u', u) \in E_t$.
    \end{itemize}
    
    As with the sets $F_{{i, t}}(v)$, we can place bounds on the size of $\bigcup_{i \in [m]} B_{{i,t}}(v)$ relative to that of $\bigcup_{i \in [m]} B_{{i,t + 1}}(v)$. First, note that $\bigcup_{i \in [m]} B_{{i,t + 1}}(v) \subseteq \bigcup_{i \in [m]} B_{{i,t}}(v)$. Further, if $\bigcup_{i \in [m]} B_{{i,t}}(v) = \bigcup_{i \in [m]} B_{{i,t + 1}}(v)$ then for every $i \in [m]$, either $C_{i, t} = B_{{i,t}}(v)$ or $C_{i, t} \cap B_{{i,t}}(v) = \emptyset$. Hence, following the same arguments as in the forward direction, there can be at most $m - 1$ vertices in $X$ for which $\left\vert \bigcup_{i \in [m]} B_{{i,t}}(v)\right\vert = \left\vert \bigcup_{i \in [m]} B_{{i,t + 1}}(v)\right\vert$, and thus $\sum_{v \in X} \left\vert \bigcup_{i \in [m]} B_{{i,t}}(v)\right\vert \geq \sum_{v \in X}\left\vert \bigcup_{i \in [m]} B_{{i,t + 1}}(v)\right\vert + \vert X \vert - (m - 1)$.

    We can now prove the statement of the lemma.

    Consider some vertex $v \in X$ where $\left\vert \bigcup_{i \in [m]} F_{{i,t}}(v) \right\vert > \left\vert \bigcup_{i \in [m]} F_{{i,t - 1}}(v)\right\vert$ and $\left\vert \bigcup_{i \in [m]} B_{{i,t}}(v) \right\vert$ > $\left\vert \bigcup_{i \in [m]} B_{{i,t + 1}}(v)\right\vert$, and let  $I_t(v) = N_{G_t}\left(\left(\bigcup_{i \in [m]} F_{{i,t - 1}}(v) \right) \cap \left(\bigcup_{i \in [m]} B_{{i,t + 1}}(v)\right)\right)$ be the set of vertices in $V \setminus \left(\left(\bigcup_{i \in [m]} F_{{i,t - 1}}(v) \right) \cap \left(\bigcup_{i \in [m]} B_{{i,t + 1}}(v)\right)\right)$ that are adjacent to at least one vertex in $\left(\bigcup_{i \in [m]} F_{{i,t - 1}}(v) \right) \cap \left(\bigcup_{i \in [m]} B_{{i,t + 1}}(v)\right)$. Note that $I_t(v)$ is empty only if either $C_{i, t} \subseteq \left(\bigcup_{i \in [m]} F_{{i,t - 1}}(v) \right) \cap \left(\bigcup_{i \in [m]} B_{{i,t + 1}}(v)\right)$ or $C_{i, t} \cap \left(\bigcup_{i \in [m]} F_{{i,t - 1}}(v) \right) \cap \left(\bigcup_{i \in [m]} B_{{i,t + 1}}(v)\right) = \emptyset$, for every $i \in [m]$. As above, if there exists any $u \in X$ such that $u \in C_{i, t}$ for any $C_{i, t}$ where $C_{i, t} \subseteq \left(\bigcup_{i \in [m]} F_{{i,t - 1}}(v) \right) \cap \left(\bigcup_{i \in [m]} B_{{i,t + 1}}(v)\right)$, then we have a walk from $v$ to $u$, within $T$ snapshots. Similarly, if $C_{i, t} \subseteq \left(\bigcup_{i \in [m]} F_{{i,t - 1}}(v) \right) \cap \left(\bigcup_{i \in [m]} B_{{i,t + 1}}(v)\right)$ and $C_{i, t} \subseteq \left(\bigcup_{i \in [m]} F_{{i,t - 1}}(v') \right) \cap \left(\bigcup_{i \in [m]} B_{{i,t + 1}}(v')\right)$, then there is a temporal walk from $v$ to $v'$ via some vertex in $C_{i,t}$. Therefore, there can be at most $m - 1$ vertices in $X$ for which $I_t(v)$ is empty. Otherwise, we have at least one vertex $w \in I_t(v)$, which we call the \emph{recorded} vertex.\looseness=-1

    Following the same argument as in \cite{Bastide2025}, there exists, for any $v \in X, w \in I_t(v)$, at most two snapshots, $t_1, t_2$, such that $w \in I_{t_1}(v)$ and $w \in I_{t_2}(v)$, noting that, if any third snapshot $t_3$ exists where $w \in I_{t_3}(v)$ and $t_1 < t_2 < t_3$, then $v$ can already be reached from $w$ starting at snapshot $t_3$, and $w$ can be reached from $v$ by snapshot $t_1$. Thus, it can not be in the neighbourhood of either $\bigcup_{i \in [m]} F_{{i, t_2 - 1}}(v)$, nor $\bigcup_{i \in [m]} B_{{i, t_3 - 1}}(v)$. Hence, $w$ can be in $I_t(v)$ for at most two snapshots for each $w$. More generally, if $w$ is recorded more than $2 \Delta(w)$ times, corresponding to the situation where every neighbour of $w$ can reach $w$ only once on some snapshot $t$, then there must exist some pair of snapshots $t_1, t_2$ where $t_1 < t_2$ and vertices $v_1, v_2 \in X$ such that $w \in I_{t_1}(v_1)$ and $w \in I_{t_2}(v_2)$, and hence there exists a temporal walk from either $v_1$ to $v_2$ or $v_2$ to $v_1$. Therefore, as $\sum_{v \in X} \vert I_t(v)\vert \geq \vert X \vert - (m - 1)$, by snapshot $T$, we have recorded at least $2 \Delta n + \vert X \vert - (m - 1)$ vertices, and thus there must exist some pair $v, u \in X$ such that there exists a temporal walk from $v$ to $u$.
\end{proof}

\begin{lemma}
    \label{lem:small_set_big_impact}
    Let $\mathcal{G} = G_1, G_2, \dots, G_T$ be an always $S$-connected temporal graph with average degree $\Delta$ where $T = 2\Delta n/k + 1$. Then, given any subset of vertices $X \subseteq V$ where $\vert X \vert \geq 2m + 2$, there exists a subset $X' \subseteq X$ of size at most $2k \log(\vert X \vert)$ such that, for every $v \in X$ there exists some $u \in X'$ such that there exists a walk from $u$ to $v$.
\end{lemma}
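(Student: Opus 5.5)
The plan is to follow the domination/tournament argument of \cite{Bastide2025}: use Lemma \ref{lem:subset_point_to_point} repeatedly to build a directed "reachability" graph on $X$ that is dense, and then extract a small dominating set greedily. Concretely, define the directed graph $H$ on vertex set $X$ with an arc $u \to v$ whenever there is a temporal walk from $u$ to $v$ within the lifetime $T = 2\Delta n/k + 1$. The goal is to find $X' \subseteq X$ of size at most $2k\log(\vert X \vert)$ that dominates $X$ in $H$, i.e. every $v \in X$ is either in $X'$ or has an in-neighbour in $X'$. Vertices of $X'$ are treated as trivially reaching themselves.

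The first step is a density bound on $H$. Apply Lemma \ref{lem:subset_point_to_point} to every subset $Y \subseteq X$ with $\vert Y \vert - (m-1) \geq k$, that is, $\vert Y \vert \geq k + m - 1$. For such $Y$ the lifetime satisfies $T \geq 2\Delta n/(\vert Y\vert - (m-1)) + 1$, so $Y$ contains an arc of $H$. Hence $H$ has no independent set (in the underlying undirected sense) of size $k + m - 1$. By Turán's theorem, the underlying undirected graph of any induced subgraph $H[Z]$ has at least about $\vert Z \vert^2/(2(k+m-2))$ edges. Since $\vert X \vert \geq 2m+2$, the $m$ term is absorbed once $\vert Z\vert$ is large compared to $m$. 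Averaging then gives some vertex $u \in Z$ whose out-degree into $Z$ is at least $\vert Z \vert/(2(k+m))$, which is $\Omega(\vert Z\vert / k)$ when $k \gtrsim m$.

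The second step is greedy peeling. Start with $Z = X$. Repeatedly pick $u \in Z$ with maximum out-degree in $H[Z]$, add $u$ to $X'$, and delete $u$ and all its out-neighbours from $Z$. Each round shrinks $\vert Z \vert$ by a factor of at least $(1 - 1/(2k))$. So after $2k\ln(\vert X \vert)$ rounds, $\vert Z \vert$ is at most a constant. Once $\vert Z\vert$ falls below the threshold where Turán no longer applies (below roughly $k+m$), I add the remaining vertices directly to $X'$. I would need to check that this costs $O(k)$ and fits in the stated bound, possibly by adjusting constants or log base. Every removed vertex is reached by some vertex of $X'$, which gives the claim.

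I expect the main obstacle to be the dependence on $m$. The independence bound from Lemma \ref{lem:subset_point_to_point} is $k + m - 1$, not $k$, so the clean factor $1/(2k)$ needs $m = O(k)$ or a careful accounting. I would first try using the hypothesis $\vert X \vert \geq 2m+2$ to argue the degree bound is $\ge \vert Z\vert/(2k)$ at least while $\vert Z\vert$ is large. Failing that, I would accept a bound of the form $2(k+m)\log\vert X\vert$, which suffices for the later application where $m$ is polynomially small. A secondary issue is the small-$Z$ termination, which I would handle by directly including leftovers in $X'$.
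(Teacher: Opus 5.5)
Your argument is sound and proves the lemma with $2k$ replaced by roughly $2(k+m)$. It shares the paper's greedy-peeling skeleton; the real difference is where the degree bound comes from.

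The paper re-runs the recording count of Lemma~\ref{lem:subset_point_to_point} on $X$ to assert that some vertex is temporally related to at least $(\vert X\vert-(m-1))/k$ others. It then chooses $v_i$ with $\vert F(v_i)\vert\ge\vert B(v_i)\vert$ and argues that $2k$ picks reach half of $X$, repeated $\log\vert X\vert$ times. You instead apply Lemma~\ref{lem:subset_point_to_point} as a black box to every subset. So the reachability digraph on each residual set has independence number at most $\alpha=k+m-2$, and you get the degree bound from Tur\'an plus averaging of out-degrees.

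Your route is modular, hereditary for free, and cleaner than the paper's write-up. The paper leaves implicit why a vertex with many relations can also be taken to satisfy $\vert F\vert\ge\vert B\vert$; your out-degree averaging settles this. The paper's definition of $X_i$ also discards $B(v_{i-1})$, vertices that are not thereby reached.

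Your bookkeeping can be tightened. Counting $u$ itself, every round deletes at least $1+(z-\alpha)/(2\alpha)\ge z/(2\alpha)$ of the $z$ remaining vertices, for every $z\ge 1$. So $z$ shrinks by the factor $1-1/(2\alpha)$ all the way to $0$, and $\vert X'\vert\le\lceil 2(k+m-2)\ln\vert X\vert\rceil$ with no separate clean-up. With $\log=\log_2$ this already meets $2k\log\vert X\vert$ whenever $m$ is at most a suitable constant fraction of $k$.

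The $m$-dependence you flag is not an artefact of your method. Abandon the first idea (using $\vert X\vert\ge 2m+2$ to force a $\vert Z\vert/(2k)$ degree bound) and keep the fallback, because the lemma as stated is false when $m\gg k\log\vert X\vert$. Take $S=\{s_1,\dots,s_m\}$, let every snapshot be the union of $m$ vertex-disjoint paths $a_i s_i b_i$, and let $X=\{a_i,b_i : i\in[m]\}\cup\{s_1,s_2\}$, so $\vert X\vert=2m+2$. No temporal walk leaves its path, so every valid $X'$ has at least $m$ vertices. Yet $2k\log\vert X\vert=2k\log(2m+2)<m$ for fixed $k$ and large $m$ (already for $k=1$, $m=16$).

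The paper's proof hides this in the step $(\vert X\vert-(m-1))/k\ge\vert X\vert/2k$. That inequality holds for the original $X$, but it is needed for the residual sets $X_{i-1}$, which eventually have fewer than $2m$ vertices. Any correct version must charge for $m$, as your $2(k+m)\log\vert X\vert$ does.
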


\begin{proof}
    We adapt the arguments in \cite{Bastide2025} to the multiple components case.

    First, if $\vert X \vert \leq 2k \log (\vert X \vert)$ then this holds directly. Otherwise, given some $\chi \subseteq X$, let $F_{\chi}(v) = \{u \in \chi \setminus \{v\}\mid$ there exists a path from $v$ to $u\}$, and $B_{\chi}(v) = \{u \in \chi \setminus \{v\}\mid$ there exists a path from $u$ to $v\}$. Now, we construct a sequence of sets $X_1, X_2, \dots, X_{\ell}$ and vertices $v_1, v_2, \dots, v_{\ell}$ such that, $X_0 = X$, $X_i = X_{i - 1} \setminus \left(F_{X}(v_{i - 1}) \cup B_{X_i}(v_{i - 1})\right)$, and $X_\ell \subseteq F(v_{\ell})$. At each step, we choose $v_i$ to be some vertex where $\vert F(v_i) \vert \geq \vert B(v_i) \vert$, and add $v_i$ to the set $R$.
    
    Now, note that, for any $v_i, v_j$ where $i \neq j$, there can be no path from $v_i$ to $v_j$ or $v_j$ to $v_i$ in $G$. We assume that $\ell$ is minimal such that this holds. Following the same arguments as Lemma \ref{lem:subset_point_to_point}, we have that in $2\Delta n/k$ snapshots, at least $2\Delta n (\vert X \vert - (m- 1)) / k + 1$ vertices will be recorded, and, therefore, there must be at least one vertex that records at least $(\vert X \vert - (m - 1))/ k \geq \vert X \vert/2k$ other vertices in $\vert X \vert$. Hence,

    \[
        \vert F(v_i) \cap X \vert \geq \vert \left(F(v_i) \cup B(v_i) \right)\cap X_{i - 1}\vert/2 \geq \left((\vert X \vert - (m - 1))/ 2k \right) - 1 \geq \vert X \vert/4k
    \]
    
    Therefore, after adding at most $2k$ vertices to $R$, at least half of the vertices in $X$ can be reached from some vertex in $R$. Repeating this at most $\log(\vert X \vert)$ times gives the result, and thus the size of $R$.
\end{proof}

\begin{lemma}
    \label{lem:path_length}
    Let $\mathcal{G} = G_1, G_2, \dots, G_T$ be an always $S$-connected temporal graph with average degree $\Delta$ where $T = mn$. Then, there exists a temporal walk in $G$ that covers at least $\sqrt{\vert X \vert / (\Delta \log \vert X \vert)}/16$ vertices in any subset $X \subseteq V$.
\end{lemma}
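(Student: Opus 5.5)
The idea is a Dilworth/Erdős--Szekeres-type dichotomy on the temporal reachability relation restricted to $X$, as in \cite{Bastide2025}. First reduce to the case $\vert X \vert$ large, say $\vert X \vert \geq 2m+2$; otherwise the target $\sqrt{\vert X \vert/(\Delta\log\vert X \vert)}/16$ is at most a constant, and a trivial walk covering one vertex suffices.

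Split the lifetime $T = mn$ into consecutive blocks of length $\tau = 2\Delta n/k + 1$, for a parameter $k$ fixed later (roughly $k \approx \sqrt{\Delta\vert X \vert \log \vert X \vert}$). Apply Lemma \ref{lem:small_set_big_impact} in each block. In a block, either some single vertex of the current candidate set temporally reaches many other candidates, or the candidates are covered by few ``sources''. More precisely, the proof of Lemma \ref{lem:small_set_big_impact} shows that, whenever $\vert X\vert > 2k\log\vert X\vert$, some vertex $v$ reaches at least $\vert X \vert/4k$ vertices of $X$ within the block.

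We then build a chain greedily. Starting from $X_0 = X$, in block $j$ we pick $v_j \in X_{j-1}$ and a large set $Y_j \subseteq X_{j-1}$ of vertices that $v_j$ reaches within the block. We then recurse on $Y_j$, so that $X_j = Y_j$, in the next block. Concatenating these segments gives a single temporal walk through $v_1, v_2, \dots$. Each $v_{j+1}$ lies in $Y_j$, so the walk reaching $v_{j+1}$ at the end of block $j$ can continue from there. Each step shrinks the candidate set by a factor of about $4k$, so the walk has length roughly $\log_{4k}\vert X \vert$ before the set becomes small.

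This factor-shrinking alone gives only logarithmic length, so we must instead use the other branch of the dichotomy. If the reachability relation within a block has no long chain, it has a large antichain. But Lemma \ref{lem:subset_point_to_point} forbids antichains of size more than about $2\Delta n/(\tau-1) + m - 1 = k + m - 1$ in a block of length $\tau$. Hence, by Dilworth's theorem, $X$ splits into at most $k+m$ chains of ``reachable within the block'' pairs, so some chain has length at least $\vert X \vert/(k+m)$.

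A chain in the static ``reachable within a block'' relation, however, is not itself a temporal walk, because the consecutive reachabilities in it all use the same snapshots. The remedy is to spend one block per hop, since $T = mn$ gives room for about $mn/\tau = mk/(2\Delta)$ blocks. Balancing the chain length $\vert X\vert/k$ against the number of usable blocks $k/\Delta$ gives $k \approx \sqrt{\Delta\vert X\vert}$, and the $\log$ factor from Lemma \ref{lem:small_set_big_impact} then yields $\sqrt{\vert X \vert/(\Delta\log\vert X\vert)}$ up to the constant $16$.

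The main obstacle is exactly this last step: converting a chain in the block-reachability order into one temporal walk while keeping the candidate set large at each stage. To handle it, I would repeatedly apply Lemma \ref{lem:small_set_big_impact} to the set of vertices not yet known to be unreachable. At each stage I would keep the larger of the forward set $F$ and the backward set $B$, as in that lemma's proof, so that either a forward extension or a backward prefix of the walk gains one covered vertex per block. Making this bookkeeping consistent with the block structure, while losing only constant factors, is the delicate part.
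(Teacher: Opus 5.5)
There is a genuine gap: the step you defer as the ``main obstacle'' is essentially the whole content of the lemma, and none of the tools you assemble gets past it. The forward greedy (recursing on $Y_j$) yields only about $\log_{4k}\vert X\vert$ vertices, as you note yourself. The Dilworth step does not help either. ``Reachable by a temporal walk inside block $j$'' is not a partial order: it is not transitive, since the walk from $x$ to $y$ may end after the walk from $y$ to $z$ has to start. And even a long directed path in that relation (Gallai--Milgram would give one) lives inside a single block, so it is not a temporal walk. Spending one block per hop does not repair this, because the relation changes from block to block. A chain $x_1,x_2,\dots$ found in one block says nothing about whether $x_2$ reaches $x_3$ in the next block, so balancing $\vert X\vert/k$ against $k/\Delta$ has no argument behind it. Your final suggestion is to keep the larger of $F$ and $B$ and grow the walk forward or prepend backward. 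But you give no rule for which snapshots a backward prefix would use, and if blocks are consumed in time order the earlier ones are already spent. (Separately, for $\vert X\vert<2m+2$ the target can be of order $\sqrt{m/(\Delta\log m)}$ rather than a constant, so a one-vertex walk does not settle that case.)

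The missing idea is to use the \emph{other} output of Lemma~\ref{lem:small_set_big_impact}, the small set of sources, and to chain \emph{backwards}. Cut the lifetime into $L$ epochs of length $P\approx 2\Delta n/k$. In epoch $i$, apply Lemma~\ref{lem:small_set_big_impact} to $R_i = X\setminus(X_1\cup\dots\cup X_{i-1})$. This gives $X_i\subseteq R_i$ of size $O(k\log\vert X\vert)$ such that every vertex of $R_i$ is reachable from some vertex of $X_i$ inside epoch $i$. Choose the parameters so that $\sum_i\vert X_i\vert\le\vert X\vert/2$. Then every $R_i$ keeps at least half of $X$, and the $X_i$ are pairwise disjoint. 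Now take any $v_L\in X_L$. Since $v_L\in R_{L-1}$, some $v_{L-1}\in X_{L-1}$ reaches it during epoch $L-1$. Since $v_{L-1}\in R_{L-2}$, some $v_{L-2}\in X_{L-2}$ reaches it during epoch $L-2$, and so on down to $v_1$. Concatenating these pieces gives one temporal walk that uses one epoch per hop. It visits $L$ vertices of $X$, which are distinct because the $X_i$ are disjoint. Balancing $L\cdot O(k\log\vert X\vert)\le\vert X\vert/2$ against $L=T/P$ is what produces $\sqrt{\vert X\vert/(\Delta\log\vert X\vert)}$. This backward chaining through disjoint per-epoch source sets is the paper's route, and your proposal has no substitute for it.
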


\begin{proof}
    We partition $\mathcal{G}$ into a set of $T / P$ \emph{epochs}, each of length $P = 16n/\left\lfloor\sqrt{\vert X \vert/(\Delta \log \vert X \vert)} \right\rfloor$. Further, we construct a sequence of subsets $X_1, X_2, \dots, X_{T /P}$, where $X_1 = X$ and $X_i \subseteq X \setminus \left(\bigcup_{j \in [1, i - 1]} X_{j} \right)$, such that, for every vertex $u \in X \setminus \bigcup_{j \in [i - 1]} X_j$, there is a vertex $v \in X_i$ with a temporal walk from $v$ to $u$ in $G_{(i - 1) \cdot P + 1}, \dots, G_{i \cdot P}$. Note that this is possible as $mn / 2P \geq 8mn\sqrt{(\Delta \log \vert X \vert) / \vert X \vert} \geq 4\Delta n m / 2k + 1$ where $k = \left\lceil \sqrt{\Delta \vert X \vert \log \vert X \vert}\right\rceil$.
    Thus, by Lemma \ref{lem:small_set_big_impact}, we can find a sequence of such sets. Note that $\left\vert \bigcup_{i \in [T/P]} X_{i}\right\vert \leq \vert X \vert / 2$.
    Hence, by starting with some vertex $v_{T/P} \in X_{T/P}$, we can find a path by finding the vertex $v_{T/P - 1} \in X_{T/P - 1}$ such that there exists a path from $v_{T/P - 1}$ to $v_{T/P}$ in $G_{T - P}, \dots, G_{T}$. We repeat this iteratively, finding the sequence of vertices $v_1, v_2, \dots, v_{T/P}$ where there exists a path from $v_i$ to $v_{i + 1}$ in $G_{iP + 1}, \dots, G_{(i + 1)P}$, giving the path.
\end{proof}

\begin{lemma}
    \label{lem:spliting_X_by_go_and_return}
    Let $\mathcal{G} = G_1, G_2, \dots, G_T$ be an always $S$-connected temporal graph with lifetime $T = 2n m^2$. Then, given any subset $X \subseteq V$, there exists a subset $X'$ of $X$ of size at least $\vert X \vert / 2m^2$ such that there exists some $t \in [2\vert S \vert + 1]$, and vertex $v \in S$ such that, for every $u \in X'$, there exists a path from $v$ to $u$ in $G_{1}, \dots, G_{2t(T/(2m + 1))}$, and from $u$ to $v$ in $G_{(t + 1)T/(2m + 1) + 1}, \dots, G_{T}$.
\end{lemma}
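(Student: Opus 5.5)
We plan a double pigeonhole argument built on Lemma~\ref{lem:connection_lemma}. Cut the lifetime $T = 2nm^2$ into $2m+1$ consecutive blocks of equal length $L = T/(2m+1)$. Since $2nm^2/(2m+1) \geq nm/2$, each block has length $\Omega(nm)$, up to constants which we absorb by rescaling $T$ if needed.

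\textbf{Step 1 (reaching each vertex from some $s\in S$).} Fix $u \in X$. In every snapshot, $u$ lies in the component of at least one vertex of $S$. Take any window of $nm$ consecutive snapshots. By pigeonhole over the $m$ vertices of $S$, some $s \in S$ is connected to $u$ in at least $n$ of these snapshots. Lemma~\ref{lem:connection_lemma} then gives a temporal walk from $s$ to $u$ inside that window. The same argument on a later window gives some $s' \in S$ and a temporal walk from $u$ back to $s'$.

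\textbf{Step 2 (choosing $t$ and the return vertex).} For $u$, let $t(u) \in [2m+1]$ be the index of the block in which the outgoing walk is completed. Run the return argument on the blocks after $t(u)+1$, so that the return lies in $G_{(t+1)L+1},\dots,G_T$. To guarantee enough room for the return, apply the outgoing argument only within the first half of the blocks, i.e.\ with $t \leq m$. This leaves at least $m$ blocks, about $nm$ snapshots, after block $t+1$, which is enough for the return walk by Step 1.

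Each $u$ thus receives a triple: the outgoing vertex $s \in S$, the block index $t$, and the return vertex $s'$. The statement needs a single $v$ that serves both as start and end, so we must force $s = s'$. \textbf{This is the main obstacle.} We first try to apply the pigeonhole over $S$ jointly over the union of an outgoing window $W_1$ and a return window $W_2$, each of length $nm$. Among the $2nm$ snapshots of $W_1 \cup W_2$, some $s$ is connected to $u$ in at least $2n$ of them. However, these snapshots could all fall in $W_1$.

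We fix this by splitting into cases. Either some $s$ is connected to $u$ in at least $n$ snapshots of $W_1$ \emph{and} in at least $n$ snapshots of $W_2$, in which case we are done. Otherwise, we slide the window to later blocks and retry. This is why the blocks are indexed by $t$ and there are $2m+1$ of them. For each $s$, the blocks in which $s$ accumulates at least $n/(2m+1)$-fractions of connectivity determine a first good block and a last good block. A counting argument over the $2m+1$ blocks shows that some $s$ has a good block early enough and another good block late enough. We expect the counting here to need care, and we would allow constant-factor slack in $T$ if the exact constants fail.

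\textbf{Step 3 (final pigeonhole).} Assign each $u \in X$ the pair $(v, t)$ with $v \in S$ and $t \in [2m+1]$. There are at most $m(2m+1)$ such pairs. Pigeonhole then gives a class $X'$ of size at least $\vert X\vert/(m(2m+1))$, which is roughly $\vert X\vert/2m^2$ after adjusting constants, or exactly by coarsening $t$ to $m$ choices. All $u \in X'$ share the same $v$ and $t$, with a walk from $v$ to $u$ within the first $2t$ blocks and a walk from $u$ to $v$ after block $t+1$, as required.
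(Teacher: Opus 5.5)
Your scaffolding matches the paper's: blocks of length about $nm$, pigeonhole over $S$ inside a block, Lemma~\ref{lem:connection_lemma}, then a final pigeonhole over pairs $(v,t)$. The gap is the step that carries the whole content of the lemma, namely making the outgoing and returning vertex of $S$ coincide. You only assert it (``a counting argument \ldots shows''), and in the form you set it up it is false.

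In Step~2 you require the outgoing walk to be completed in a block $t\le m$ and look for a return after block $t+1$. Take $m=2$ and five blocks. Fix the component structure during blocks $1,2$ so that $u$ lies with $s_1$ but not $s_2$, and fix it during blocks $3,4,5$ so that $u$ lies with $s_2$ but not $s_1$. Then among the vertices of $S$, only $s_1$ can reach $u$ by the end of block~$2$. After block~$2$, $u$ can only reach vertices in the component of $s_2$, which excludes $s_1$. So no admissible choice has $s=s'$. Yet the lemma holds for this $u$ with $v=s_2$: go in block $3$ and return in block $5$. Hence ``early enough'' cannot be an absolute condition, and sliding the window and retrying gives no guarantee of success. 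Also, a per-block threshold below $n$ connected snapshots, such as your ``$n/(2m+1)$-fractions'', would not let you invoke Lemma~\ref{lem:connection_lemma}.

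The missing idea is short. Call $s\in S$ \emph{good} for $u$ in a block if $s$ and $u$ share a component in at least $n$ snapshots of that block. This notion has no direction. Connectivity within a snapshot is symmetric, so Lemma~\ref{lem:connection_lemma} applied to those snapshots gives both a walk from $s$ to $u$ and a walk from $u$ to $s$ inside that single block. Every block of length at least $nm$ has a good $s$. Now apply pigeonhole over $S$ to the $m+1$ alternate blocks $1,3,\dots,2m+1$. Two of them, $t_1<t_2$, share a good vertex $v$, and $t_2\ge t_1+2$. So you go in block $t_1$, return in block $t_2$, and take $t=t_1$; the paper encodes this with the sets $R(u,t)$ and $R'(u,t)$.

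This also fixes your constant in Step~3. Here $t_1$ ranges over $\{1,3,\dots,2m-1\}$, so there are only $m^2$ classes $(v,t)$, and some class has at least $|X|/m^2\ge |X|/2m^2$ vertices. The paper instead uses $t\in[2,2m]$, giving $m(2m-1)\le 2m^2$ classes. Your remark about rescaling $T$, since $2nm^2/(2m+1)<nm$, is fair; the paper's proof has the same slack.
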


\begin{proof}
    First, note that, for any sequence of at least $m \cdot n$ snapshots and vertex $v \in S$ there must exist at least one $u \in X$ such that there exists some path from $v$ to $u$ and $u$ to $v$. Let $R(u, t)$ denote the set of vertices in $S$ such that there exists a path from $v$ to $u$ and $u$ to $v$ for every $u \in R(u, t)$ in the graph $G_{(t - 1)T/(2m + 1) + 1}, \dots, G_{tT/(2m + 1)}$. Observe that by a pigeonhole argument, there must be at least one pair $t_1, t_2 \in [2m + 1]$ where $t_1 + 1 < t_2$ such that $R(u, t_1) \cap R(u, t_2) \neq \emptyset$. Now, let $R'(u, t) = \left(\bigcup_{t_1 \in [1, t - 1]} R(u, t_1)\right) \cap \left(\bigcup_{t_2 \in [t + 1, 2m+1]} R(u, t_2)\right)$, noting that, by the previous argument, for every $u \in X$ there must be at least one value of $t \in [2, 2m]$ for which $R'(u, t) \neq \emptyset$. More broadly, there must, therefore, be at least one $t \in [2, 2m]$ and $v \in S$ such that $\vert \{u \in X \mid v \in R'(u, t) \}\vert \geq \vert X \vert /2m^2$, giving the result.
\end{proof}

We can now prove our main result. At a high level, we combine Lemmas \ref{lem:path_length} and \ref{lem:spliting_X_by_go_and_return} to explore the set of unvisited vertices in an iterative manner, exploring $\sqrt{\vert X' \vert / (\Delta \log \vert X' \vert)} / 16m$ in each \emph{epoch}, while maintaining the key property that, after each exploration, we have exactly one agent on each vertex in $S$. This way, we allow ourselves to repeatedly apply the same approach in each epoch to an increasingly small number of vertices, until we are left with some set of $O(m^2)$, which we may explore via Lemma \ref{lem:there_and_back_again} in $O(n m^3)$ snapshots.

\begin{theorem_exploring_S_always_connected}
    Let $\mathcal{G} = G_1, G_2, \dots, G_T$ be an always $S$-connected temporal graph with lifetime $T = O\left(
    n m^3 \Delta^{1.5} \sqrt{\vert X \vert \log\left(m\vert X \vert\right)}\log(\vert X \vert)\right)$ where $m = \vert S \vert$ and $\Delta$ is the average degree of the graph. Then any subset $X \subseteq V$ of vertices can be explored by $m$ agents.
\end{theorem_exploring_S_always_connected}

\begin{proof}
    We achieve our exploration by using a set of epochs, each of length $P = 4\Delta(2m + 1)mn$.

    First, using Lemma \ref{lem:spliting_X_by_go_and_return}, we find some subset $X'$ of size $\vert X \vert /2m^2$ as described in the statement of the Lemma for the graph $G_1, \dots, G_P$, with the vertex $v \in S$ able to reach each vertex $u \in X'$ in the subgraph $G_1, \dots, G_{(t - 1)P/(2m + 1)}$ and each $u \in X'$ able to return to $v$ in the graph $G_{(t + 1)P/(2m + 1)}$, for some $t \in [2, 2m + 1]$.
    From Lemma \ref{lem:path_length}, we can find a temporal walk in the subsequence of snapshots $G_{(t - 1)P/(2m - 1) + 1}, \dots, G_{tP(2m - 1)}$ covering the subset $\chi \subseteq X'$ of size at least $\sqrt{\vert X' \vert / (\Delta \log\vert X' \vert)} / 16$ vertices in $X'$, and thus $\sqrt{\vert X \vert / \left(\Delta \log(\vert X \vert / 2m^2)\right)} / 32m$ of the vertices in $X$. Therefore, within the sequence $G_1, \dots, G_{P}$ we can have an agent starting at vertex $v$, visit each vertex in $\chi$, and return to $v$. Note this maintains the key invariant that at the end of each epoch of $P$ snapshots, we have an agent on each vertex of $S$. We repeat this strategy until a subset of $X$ of size $O(m^2)$ remains unexplored. We ``clean up'' these remaining vertices in $O(n m^3)$ snapshots using the direct approach of Lemma \ref{lem:there_and_back_again}.
    
    To determine the total number of snapshots needed, let $X_i$ be the set of unexplored vertices at the start of the $i^{th}$ epoch, with $X_1 = X$. Noting that during the $i^{th}$ epoch we can explore $\vert \sqrt{\vert X_i \vert / \left(\Delta \log(\vert X_i \vert / 2m^2)\right)} / 32m$ vertices of $X_i$, we have $\vert X_{i + 1} \vert \leq$ $\vert X_{i} \vert -  \sqrt{\vert X_{i} \vert / \left(\Delta \log(\vert X_i \vert / 2m^2)\right)} / 32m$. Note that, after $\ell = 16m\sqrt{ \Delta \vert X \vert \left(\log(m\vert X \vert)\right)/2}$ epochs, we have $\vert X_{i + \ell} \vert \leq \max(\vert X_i \vert / 2, 2m^2)$. Thus, after $O(\ell \log \vert X \vert)$ epochs we have $\vert X_{\ell \log \vert X \vert}\vert = O(m^2)$. Finally, by Lemma \ref{lem:there_and_back_again}, we can visit the remaining vertices in $O(n m^3)$ snapshots. Thus, if $T = O(\ell P\log \vert X \vert + n m^3) = O\left(
    n m^3 \Delta^{1.5} \sqrt{\vert X \vert \log\left(m\vert X \vert\right)}\log(\vert X \vert) + nm^3\right)$ = \\$O\left(
    n m^3 \Delta^{1.5} \sqrt{\vert X \vert \log\left(m\vert X \vert\right)}\log(\vert X \vert)\right)$ snapshots we can explore all vertices in $X$ with $m$ agents.\looseness=-1
\end{proof}

\begin{corollary}
    \label{col:total_exploration_m_agents}
    Let $\mathcal{G} = G_1, G_2, \dots, G_T$ be an always $S$-connected temporal graph with lifetime 
    $ = O\left(
    n^{1.5} m^3 \Delta^{1.5}\log^{1.5}(n)\right)$. Then all vertices in $G$ can be explored by $m$ agents, with each starting on one of the vertices in $S$.
\end{corollary}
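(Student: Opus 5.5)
The corollary should follow by instantiating Theorem \ref{thm:exploring_S_always_connected} with $X = V$ and simplifying the lifetime bound. I will use no new combinatorics; the work is substituting parameters and checking that the logarithmic factors collapse as claimed.

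First, I set $X = V$, so $\vert X \vert = n$. The agents start on the vertices of $S$, one per vertex, which is exactly the setting the theorem assumes: every epoch in its proof begins and ends with one agent on each vertex of $S$. So the theorem applies directly. It guarantees that a lifetime of $O\left(n m^3 \Delta^{1.5}\sqrt{n \log(mn)}\log n\right)$ suffices to explore all of $V$.

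Second, I simplify this expression. Since $S \subseteq V$, we have $m \le n$, hence $\log(mn) \le 2\log n$ and $\sqrt{\log(mn)} = O(\sqrt{\log n})$. The bound therefore becomes
\[
O\left(n m^3 \Delta^{1.5} \sqrt{n}\,\sqrt{\log n}\,\log n\right) = O\left(n^{1.5} m^3 \Delta^{1.5}\log^{1.5}(n)\right),
\]
which is the claimed lifetime. The additive $O(nm^3)$ clean-up term from Lemma \ref{lem:there_and_back_again}, absorbed in the theorem's proof, is also dominated here, since $\Delta \ge 1$ whenever the graph has any edges. If the graph has no edges at all, $S$-connectivity forces $S = V$ and the agents have already visited every vertex.

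The only step I expect to need care is the degenerate regime where the logarithms in the theorem's proof, such as $\log(\vert X \vert / 2m^2)$, are non-positive or undefined, i.e.\ when $n = O(m^2)$. I would handle it as a separate case. If $n \le c\,m^2$ for a suitable constant $c$, I skip the epoch phase and explore all vertices one by one with Lemma \ref{lem:there_and_back_again}. This costs $O(n \cdot nm) = O(n^2 m)$ snapshots, and since $n = O(m^2)$ we have $n^{0.5} = O(m)$, so $n^2 m = O(n^{1.5} m^2) \subseteq O(n^{1.5} m^3 \Delta^{1.5}\log^{1.5} n)$. In the complementary regime $n > c\,m^2$, every logarithm is at least a positive constant and the substitution above is valid, which completes the argument.
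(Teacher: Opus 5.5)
Your proposal is correct and is exactly the paper's (implicit) derivation. The corollary is Theorem~\ref{thm:exploring_S_always_connected} with $X = V$, using $m \le n$ so that $\sqrt{n\log(mn)}\,\log n = O(n^{0.5}\log^{1.5} n)$. The separate degenerate-case split is unnecessary once you take the theorem as stated. Also, your aside that ``$\Delta \ge 1$ whenever the graph has any edges'' is false for the average degree: a single edge on $n$ vertices gives $\Delta = 2/n$. Your main derivation does not depend on that claim.
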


\section{Applications}
\label{sec:applications}

We now adapt the results from \cite{Adamson2022} to provide improved bounds on exploration graphs of bounded tree width. First, we provide a bound on the number of snapshots needed to explore always-connected temporal graphs whose underlying graph contains an \emph{$(r,b)$-division}.

\begin{definition}[$(r,b)$-Division of Graphs]
    A set of vertices $S \subseteq V$ is an \emph{$(r,b)$-Division} of the graph $G = (V, E)$ if there is set of components covering the subgraph $G[V \setminus S]$ such that each component satisfies the following:
    \begin{itemize}
        \item The size of each component is at most $r$.
        \item There are at most $b$ vertices in $S$ that are incident to any vertex in the component, called the \emph{boundary vertices}.
        \item Given any edge $(v, u) \in E$, either both vertices are in the same component, both vertices are in $S$ or one vertex is in a component, and one is in $S$.
    \end{itemize}
    An $(r,b)$-division is \emph{strict} if there are $O(n/r)$ disconnected components of size $r$ in $G[V \setminus S]$.
\end{definition}

\begin{theorem}
    \label{thm:R_B_devision}
    Any always-connected temporal graph whose underlying graph has a strict $(r,b)$-division can be explored by $b$ agents in
    $O\left(
    nr^{0.5} b^3 \Delta^{1.5}\log^{1.5}(r) + n^2/r\right)$ snapshots.
\end{theorem}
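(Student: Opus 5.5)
The plan is to explore the components of the strict $(r,b)$-division one at a time. For each component we use Theorem \ref{thm:exploring_S_always_connected}, with the boundary vertices of the component playing the role of $S$. Between components we move the team of $b$ agents using Lemma \ref{lem:connection_lemma}.

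Fix a component $C$ of $G[V\setminus S]$ and let $S_C\subseteq S$ be its boundary vertices, so $|S_C|\le b$. Since every snapshot $G_t$ is connected and the only edges leaving $C$ go to $S_C$, every path in $G_t$ from a vertex of $C$ to a vertex outside $C\cup S_C$ must pass through $S_C$. The restriction of $G_t$ to $C\cup S_C$ is therefore always $S_C$-connected: each $v\in C$ reaches some vertex of $S_C$ using only vertices of $C$. Moreover, a temporal walk inside this restricted graph is also a temporal walk in $\mathcal{G}$.

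So once one agent sits on each vertex of $S_C$, we apply Theorem \ref{thm:exploring_S_always_connected} (equivalently Corollary \ref{col:total_exploration_m_agents}) to the restricted temporal graph. It has $n_C\le r+b$ vertices, $m\le b$, and we take $X=C$. This explores $C$ in $O(r^{1.5}b^3\Delta_C^{1.5}\log^{1.5} r)$ snapshots. I would bound the local average degree $\Delta_C$ by $\Delta$ up to constants. If that fails, I would charge the degree term to the global $\Delta$ by summing over components rather than bounding each one separately.

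By strictness there are $O(n/r)$ components. Summing gives $O(n/r)\cdot O(r^{1.5}b^3\Delta^{1.5}\log^{1.5} r)=O(nr^{0.5}b^3\Delta^{1.5}\log^{1.5} r)$, which is the first term of the bound.

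It remains to reposition the agents between consecutive components. After $C$ is finished, the proof of Theorem \ref{thm:exploring_S_always_connected} maintains the invariant that the agents end on $S_C$. We then send them to $S_{C'}$ for the next component $C'$. Because $\mathcal{G}$ is always connected, Lemma \ref{lem:connection_lemma} lets each agent move from any vertex to any other in $n$ snapshots. All agents can move in parallel, since they do not interact. So each transition costs $O(n)$ snapshots, and over $O(n/r)$ transitions this contributes $O(n^2/r)$, the second term.

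The vertices of $S$ themselves still need to be visited. I expect this to be handled for free: each vertex of $S$ lies on the boundary of some component, so it is occupied when that component is started. An isolated vertex of $S$ with no adjacent component can be visited during one of the $O(n)$ transitions.

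I expect the main obstacle to be the first step, namely justifying that the restriction to $C\cup S_C$ is genuinely always $S_C$-connected. This requires that every connection from $C$ to the rest of the graph within a snapshot goes through $S_C$ via edges internal to $C\cup S_C$, which holds by the third condition of the division. A second concern is making sure the degree and $\log$ parameters in Theorem \ref{thm:exploring_S_always_connected} are the local ones, $\log(r)$ rather than $\log(n)$. A further concern is whether $b$ agents suffice when different components have different boundary sets: surplus agents can simply be parked, and we only need $|S_C|\le b$.
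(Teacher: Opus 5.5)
Your proposal is the paper's proof: apply Theorem~\ref{thm:exploring_S_always_connected} to each of the $O(n/r)$ components, using its at most $b$ boundary vertices as $S$, then pay $O(n)$ snapshots per transition via Lemma~\ref{lem:connection_lemma} to get the $O(n^2/r)$ term; your check that $G_t[C\cup S_C]$ is always $S_C$-connected makes explicit a step the paper leaves implicit. The degree caveat you raise is shared with the paper, which also silently uses the global $\Delta$ in place of each component's local average degree, but your charging fallback would not repair it: the per-component bound grows like $|E_C|\sqrt{\Delta_C}$, so one dense component (e.g.\ a clique of size $r$) can by itself exceed the stated total, and the theorem should be read with $\Delta$ bounding the average degree of every component.
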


\begin{proof}
    We can explore each component of size $(r + b)$ in $O\left(
    r^{1.5} b^3 \Delta^{1.5}\log^{1.5}(r)\right)$ snapshots per Theorem \ref{thm:exploring_S_always_connected}. As there are $n / r$ such components, we can explore all components in $O\left(nr^{0.5} b^3 \Delta^{1.5}\log^{1.5}(r)\right)$ snapshots. Thus, as we require at most $n$ snapshots to move the set of agents between components, we need an additional $O(n^2/r)$ snapshots, giving the total number of snapshots needed as $O\left(nr^{0.5} b^3 \Delta^{1.5}\log^{1.5}(r) + n^2/r\right)$.
\end{proof}

\begin{theorem}
    \label{thm:r_b_one_agent}
    Any always-connected temporal graph whose underlying graph has a strict $(r,b)$-division can be explored by a single agent in $O\left(nr^{0.5} b^4 \Delta^{1.5}\log^{1.5}(r)\log(n) + n^2\log(n)/r\right)$ snapshots.
\end{theorem}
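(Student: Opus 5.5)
The plan is to combine Theorem \ref{thm:R_B_devision} with the multi-to-single agent reduction of Lemma \ref{lem:multi_agent_to_single_agent}. The underlying graph is always connected, so the reduction applies directly. The one point to settle is that Theorem \ref{thm:R_B_devision} is stated for a class of always-connected temporal graphs, namely those whose underlying graph admits a strict $(r,b)$-division. This class is exactly what Lemma \ref{lem:multi_agent_to_single_agent} quantifies over. It is also closed under the operations the reduction uses: truncating to a window of snapshots changes neither the underlying graph's division structure (a subgraph inherits the division) nor always-connectedness.

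The first step is to set $k = b$ and
\[
T = O\left(nr^{0.5} b^3 \Delta^{1.5}\log^{1.5}(r) + n^2/r\right),
\]
as given by Theorem \ref{thm:R_B_devision}. Lemma \ref{lem:multi_agent_to_single_agent} then yields a single-agent exploration in $O((T + n) b \log n)$ snapshots.

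The second step is to expand this bound:
\[
O\left(nr^{0.5} b^4 \Delta^{1.5}\log^{1.5}(r)\log n + b n^2\log(n)/r + bn\log n\right).
\]
The term $bn\log n$ is absorbed by the first term, since $r, b, \Delta \geq 1$.

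The main obstacle is the middle term. The reduction naively gives $b n^2\log(n)/r$, but the statement claims $n^2\log(n)/r$ with no factor of $b$. I would therefore not apply the lemma as a black box to the whole schedule. Instead I would open up the proof of Theorem \ref{thm:R_B_devision} and apply the simulation component by component.

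Only the per-component exploration phase genuinely uses $b$ agents; it costs $O(r^{1.5}b^3\Delta^{1.5}\log^{1.5} r)$ snapshots per component. The inter-component transfers cost $O(n)$ snapshots each, and a single agent can perform them itself using Lemma \ref{lem:connection_lemma} in always-connected graphs, so no $b$-fold simulation is needed there.

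Concretely, the single agent processes the $O(n/r)$ components in order. For each component, it simulates the $b$-agent schedule restricted to that component via the argument behind Lemma \ref{lem:multi_agent_to_single_agent}, incurring a factor $O(b\log n)$ on the per-component time. The $O(n)$ simulation overhead inside each component can be kept within the component's size and boundary, or else charged to the transfer. The agent then moves to the next component in $O(n)$ snapshots. Travel between the simulated agents' locations should be charged per component as $O((r+b)\log n)$-type overhead, or bounded within the existing first term.

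Summing, the per-component exploration contributes $O(nr^{0.5}b^4\Delta^{1.5}\log^{1.5}(r)\log n)$. The transfers contribute $O(n\cdot n/r)$, and a $\log n$ factor is allowed from the reduction's doubling-style restarts, giving $O(n^2\log(n)/r)$. If the $b$ factor on the transfer term cannot be removed this way, I would fall back on the black-box bound and note that it matches the statement up to a factor $b$ on that term.
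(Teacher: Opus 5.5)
Your first two steps are exactly the paper's proof, which is the single sentence ``apply Lemma~\ref{lem:multi_agent_to_single_agent} to Theorem~\ref{thm:R_B_devision}''. Your arithmetic is right. With $k=b$ that black box gives $O\bigl(nr^{0.5}b^4\Delta^{1.5}\log^{1.5}(r)\log n + bn^2\log(n)/r\bigr)$, and the paper drops the factor $b$ on the second term without justification. So the discrepancy you spotted is real, and the paper does not resolve it anywhere.

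Your proposal has a genuine gap in the component-wise repair. Let $S_C$ denote the boundary vertices of a component $C$, where the simulated agents start.
\begin{itemize}
\item Simulating the $b$-agent schedule inside $C$ by the argument behind Lemma~\ref{lem:multi_agent_to_single_agent} only guarantees that each round covers a $1/b$ fraction of the unvisited vertices. So it uses up to $\Theta(b\log r)$ rounds per component.
\item Before each round your agent must move to the start vertex in $S_C$ of the best simulated agent. But $G_t[C\cup S_C]$ is only always $S_C$-connected, not connected. Two boundary vertices may lie in different components of every remaining snapshot restricted to $C\cup S_C$.
\item Hence this relocation cannot be bounded by anything like $O(r+b)$. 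It has to go through the whole graph and costs $O(n)$ via Lemma~\ref{lem:connection_lemma}.
\item Nor can it be ``charged to the transfer'': there is one transfer per component but $\Theta(b\log r)$ relocations.
\end{itemize}
Summing gives $\frac{n}{r}\cdot O(b\log r)\cdot O(n)$, so the factor $b$ reappears on the $n^2/r$ term. The remark about a $\log n$ from ``doubling-style restarts'' fits the target bound but is not an argument.

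What you can actually prove, and what the paper's one-line argument actually proves, is your fallback bound with the extra $b$. That weaker bound suffices for every later use in the paper. In the treewidth, grid and interval-graph applications, $r=\Theta(N^{2/3})$ with $N$ the number of vertices. There the term $bN^2\log(N)/r=O(bN^{4/3}\log N)$ is dominated by the first term $N^{4/3}b^4\Delta^{1.5}\log^{1.5}(r)\log N$.
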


\begin{proof}
    This follows by applying Lemma \ref{lem:multi_agent_to_single_agent} to Theorem \ref{thm:R_B_devision}.
\end{proof}

\paragraph*{Graphs with bounded treewidth}

\begin{lemma}[Generalisation of Lemma 4.4 \cite{erlebach2021temporal}]
    Every graph of treewidth at most $k$ admits a strict $(n^{2/3}, 6k)$-division.
\end{lemma}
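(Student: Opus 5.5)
The plan is to build the division from balanced separators, following Lemma 4.4 of \cite{erlebach2021temporal}. The key fact is the standard one: for any graph $G$ of treewidth at most $k$ and any weight function on $V$, there is a bag of the tree decomposition, of size at most $k+1$, whose removal leaves components each carrying at most half of the total weight. First we fix a tree decomposition of width $k$. Then we recursively separate the graph until every component of $G[V \setminus S]$ has at most $r = n^{2/3}$ vertices, collecting all separator vertices into $S$.

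\textbf{Size bound.} We split components using the separator with vertex weights. Splitting stops once a piece has size at most $r$. As in planar $r$-divisions, the recursion tree then has $O(n/r)$ leaves, and each internal node contributes at most $k+1$ separator vertices. Hence $\vert S\vert = O(kn/r)$. Grouping small leaf pieces together, without exceeding $r$, gives $O(n/r)$ components, which is the strictness requirement.

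\textbf{Boundary bound.} This is the delicate step. Each piece inherits boundary vertices from all separators above it in the recursion, so the boundary can grow with depth. To control it, we alternate the two kinds of split as in Frederickson's construction, but using treewidth separators.
\begin{enumerate}
\item If a piece has more than $6k$ boundary vertices, we split it with a separator that balances the \emph{boundary} weight. Each resulting part gets at most half the old boundary plus $k+1$ new vertices.
\item Otherwise we split to balance \emph{vertex} weight. This adds at most $k+1$ boundary vertices.
\end{enumerate}
Inductively, if a piece has boundary at most $6k$, a vertex split gives boundary at most $6k+k+1$. A subsequent boundary split then gives at most $(7k+1)/2 + k+1 \le 6k$ for $k\ge 1$. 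So the boundary stays bounded by $6k$, up to the exact constant; I expect to adjust the thresholds, for example by splitting whenever the boundary exceeds $4k$, to land on exactly $6k$. The boundary splits at most double the number of pieces, so the count remains $O(n/r)$.

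\textbf{Main obstacle.} The hard part is ensuring the boundary-balancing splits do not blow up the piece count or violate strictness. My fallback is the direct argument from \cite{erlebach2021temporal}: root the tree decomposition, greedily cut at the lowest bags whose subtrees contain more than $r$ vertices, and merge the pieces adjacent to a common cut bag. A piece hanging off a cut bag is then bounded by that bag and its parent, giving at most $2(k+1)\le 6k$ boundary vertices. Each cut removes more than $r$ vertices, so there are at most $n/r$ cuts, and the pieces regroup into $O(n/r)$ components of size at most $r$.
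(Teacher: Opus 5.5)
The paper states this lemma without proof; it only points to Lemma~4.4 of \cite{erlebach2021temporal}. Your argument therefore has to stand on its own, and as written it does not close. The main route fails where you say it does, and at one further point.
A bag separator can shatter a piece into many components (think of the centre of a star), so the recursion tree need not have $O(n/r)$ leaves. The claim that boundary splits ``at most double the number of pieces'' is asserted without argument. Finally, ``grouping small leaf pieces together'' replaces their boundaries by the union of those boundaries. The $6k$ bound is therefore lost unless you merge only pieces with a common boundary and then count those classes, which you do not do.

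The fallback is where the proof is supposed to close, and it contains a false step. After greedy bottom-up cutting, it is not true that a piece is bounded by at most two cut bags. A component of the decomposition tree minus the cut nodes has at most one cut node above it, but it can have arbitrarily many below it; the greedy controls sizes, not this.

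A concrete counterexample is a comb. Take a spine $y_1,\dots,y_L$ with a pendant path $a^i_1,\dots,a^i_{r+1}$ at each $y_i$. Then $n=L(r+2)$, $L\approx n^{1/3}$, and the treewidth is $1$. Your greedy cuts once inside each pendant path, at the bag $\{a^i_1,a^i_2\}$. It never cuts the spine, which carries only about $2L\le r$ vertices. So the spine piece is adjacent to the $L$ distinct separator vertices $a^1_1,\dots,a^L_1$, far more than $6k=6$ for large $n$.

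The missing idea is to close the set $M$ of cut nodes under lowest common ancestors.
\begin{itemize}
\item This at most doubles $|M|$, so $|S|=O(kn/r)$ survives, and pieces only shrink.
\item Afterwards every component of the decomposition tree minus $M$ has at most one cut node above it and at most one below. Two cut nodes hanging below the same component would have their lowest common ancestor inside it, contradicting closure. Every boundary vertex lies in a bag adjacent to the component, so the boundary is at most $2(k+1)\le 6k$.
\item Merging must then be restricted to pieces with no cut node below them. Group these by their parent cut node, so their boundary stays inside one bag, and pack them into bins of at most $r$ vertices.
\item The remaining pieces number at most one per cut node, which gives $O(n/r)$ components in total.
\end{itemize}
A minor correction: each cut finalises more than $r-(k+1)$ vertices, not $r$. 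This is harmless for $r\ge 2(k+1)$, and for larger $k$ one can simply take $S=V$.
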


\begin{theorem_treewidth_exploration}
    Any always-connected temporal graph whose underlying graph has treewidth $k$ can be explored by a single agent in $O\left(n^{4/3} k^{5.5}\log^{2.5}(n)\right)$ snapshots.
\end{theorem_treewidth_exploration}






\paragraph*{Outer-planar Graphs}

\begin{theorem}
    Any always-connected outerplanar temporal graph can be explored by a single agent in $O(n^{4/3} \log^{2.5}n)$ snapshots.
\end{theorem}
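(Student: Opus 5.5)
The plan is to mirror the treewidth argument. First I would establish that every outerplanar graph admits a strict $(r,b)$-division with $r = n^{2/3}$ and $b = O(1)$, and that its average degree is constant. Then I would substitute these parameters into Theorem~\ref{thm:r_b_one_agent}.

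The constant average degree is immediate: an outerplanar graph has at most $2n-3$ edges, so $\Delta < 4$.

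For the division, the simplest route is to note that outerplanar graphs have treewidth at most $2$. The preceding generalisation of Lemma~4.4 of \cite{erlebach2021temporal} then gives a strict $(n^{2/3}, 12)$-division directly, so $b = 12 = O(1)$. If one prefers a self-contained argument, I would instead recursively apply the fact that outerplanar graphs have balanced separators of size at most $3$ (a chord or triangle of a triangulated outerplanar embedding). Since separators are split off until pieces have size at most $n^{2/3}$, this yields $O(n^{1/3})$ pieces. The one delicate point is keeping the number of boundary vertices per piece bounded. For that I would use the standard trick of alternately separating with respect to size and with respect to boundary count, exactly as in the treewidth lemma.

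Plugging into Theorem~\ref{thm:r_b_one_agent} with $r = n^{2/3}$, $b = O(1)$, $\Delta = O(1)$ gives
\[
O\left(n \cdot n^{1/3}\log^{1.5}(n^{2/3})\log n + n^2\log n / n^{2/3}\right) = O\left(n^{4/3}\log^{2.5} n\right),
\]
since $\log(n^{2/3}) = \Theta(\log n)$ and the second term is $O(n^{4/3}\log n)$. This is the claimed bound.

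The main obstacle is the strictness condition of the division, namely that we get $O(n/r)$ components, each of size at most $r$, with constant boundary. Relying on the treewidth-$2$ lemma sidesteps this, so I would take that route first. I would fall back on the explicit separator recursion only if that lemma's proof needs a degree bound that is not available here.
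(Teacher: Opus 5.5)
Your proposal is correct and follows essentially the same route the paper intends; the paper gives no separate proof for this theorem. Outerplanar graphs have treewidth at most $2$, so the strict $(n^{2/3},12)$-division and the bound $\Delta<4$, substituted into Theorem~\ref{thm:r_b_one_agent} (equivalently, Theorem~\ref{thm:treewidth_exploration} with $k=2$), yield $O(n^{4/3}\log^{2.5} n)$. Since outerplanarity is hereditary, your bound $\Delta<4$ also holds inside every piece of the division, which is the degree bound the per-component application of Theorem~\ref{thm:exploring_S_always_connected} actually relies on.
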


\paragraph*{Temporal Grids}

\begin{theorem}
    Any always-connected temporal $m \times n$ grid can be explored in \\$O\left(n^{4/3} m^{16/3} \Delta^{1.5}\log^{2.5}(nm)\right)$ snapshots with a single agent.
\end{theorem}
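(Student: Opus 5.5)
The plan is to treat the $m \times n$ grid as a graph on $N = nm$ vertices that has an explicit strict $(r,b)$-division, and then apply Theorem~\ref{thm:r_b_one_agent} with a balanced choice of $r$. No new temporal argument should be needed: everything temporal is already packaged in Theorem~\ref{thm:r_b_one_agent}, which rests on Theorem~\ref{thm:exploring_S_always_connected} and Lemma~\ref{lem:multi_agent_to_single_agent}. The work is in building the division and checking the arithmetic.

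\textbf{Step 1: the division.} Assume the grid has $m$ rows and $n$ columns, and fix a width $w$. Put into the set $S$ every $(w+1)$-st column, that is, columns $w+1, 2(w+1), \dots$. Removing these columns leaves blocks of $m \times w$ consecutive columns.
\begin{itemize}
\item Each block is a connected component of $G[V \setminus S]$ with $r = mw$ vertices.
\item Its only neighbours in $S$ lie in the two separator columns on either side, so it has at most $b = 2m$ boundary vertices.
\item Every grid edge lies inside a block, inside a separator column, or between a block and an adjacent separator column, so the edge condition holds.
\item There are about $n/(w+1) = O(N/r)$ blocks, so the division is strict.
\end{itemize}
The average degree satisfies $\Delta \le 4$. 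I will nevertheless keep $\Delta$ symbolic, as the statement does.

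\textbf{Step 2: plug in and balance.} Theorem~\ref{thm:r_b_one_agent} bounds the single-agent exploration time by
\[
O\left(N r^{0.5} b^4 \Delta^{1.5} \log^{1.5}(r) \log(N) + N^2 \log(N)/r\right).
\]
Take $r = N^{2/3}$, mirroring the treewidth case, which corresponds to $w = N^{2/3}/m$.
\begin{itemize}
\item The second term becomes $N^{4/3}\log N$.
\item The first term becomes $N^{4/3}(2m)^4 \Delta^{1.5} \log^{1.5}(N)\log(N) = O\left((nm)^{4/3} m^4 \Delta^{1.5}\log^{2.5}(nm)\right)$.
\end{itemize}
Since $(nm)^{4/3} m^4 = n^{4/3} m^{16/3}$, the first term is exactly the claimed bound, and it dominates the second.

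\textbf{Step 3 (expected main obstacle): integrality and boundary cases.} The division needs $w \ge 1$, i.e.\ $N^{2/3} \ge m$, which is equivalent to $m \le n^2$.
\begin{itemize}
\item If $m > n^2$, I would first try swapping the roles of rows and columns, cutting along rows instead.
\item Failing that, I would fall back to treating the whole grid as a single component with $S$ one column. The same formula then still upper-bounds the cost, since $r \le N$ and the extra factors only increase it.
\end{itemize}
Rounding $w$ to an integer changes $r$ by at most a factor of $2$, which affects only constants. The $\log^{1.5} r$ factor is at most $\log^{1.5}(nm)$. Together with the $\log N$ factor from Lemma~\ref{lem:multi_agent_to_single_agent}, this gives the stated $\log^{2.5}(nm)$.
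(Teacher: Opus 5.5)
Your proposal is correct and is the paper's argument: cut the grid into column strips to get an $(\ell m, m)$-division, take $r = \ell m = (nm)^{2/3}$, and apply the single-agent bound of Theorem~\ref{thm:r_b_one_agent}. The paper cites Theorem~\ref{thm:R_B_devision}, but its $m^{16/3}$ and $\log^{2.5}$ only come out of the single-agent version; your $b = 2m$ is the more accurate boundary count and costs only a constant. Your Step~3 goes beyond the paper, which silently assumes $(nm)^{2/3} \ge m$: the row swap does settle $m > n^2$ (strips of $(nm)^{2/3}/n \ge 1$ rows with $b = 2n < 2m$), so you can drop the single-column fallback, which as phrased would only give a first term of order $(nm)^{1.5} m^4$.
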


\begin{proof}
    Note that any temporal $m \times n$ grid admits a $(\ell \cdot m, m)$ division for any $\ell \in [n]$. Thus, by choosing the value of $\ell$ that maximises the exploration as laid out by Theorem \ref{thm:R_B_devision}, we can determine the optimal exploration. Choosing $\ell$ such that $\ell \cdot m = (nm)^{2/3}$ we get the total exploration time as $O\left(n^{4/3} m^{16/3} \Delta^{1.5}\log^{2.5}(nm)\right)$.
\end{proof}

\begin{corollary}
    Any always-connected temporal $m \times n$ grid where $m$ is constant can be explored in $O(n^{4/3} \log^{3/2} n)$ snapshots with a single agent.
\end{corollary}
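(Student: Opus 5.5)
The plan is to specialise the single-agent bound for strict $(r,b)$-divisions, Theorem~\ref{thm:r_b_one_agent}, to the grid, and then choose the division parameter $\ell$ directly rather than reuse the choice made in the grid theorem above. For an $m\times n$ grid with $m$ constant there are $N = nm = \Theta(n)$ vertices. The average degree satisfies $\Delta \le 4$. Cutting the grid at every $\ell$-th column gives a strict $(\ell m, m)$-division, so $b = m = O(1)$ and $r = \ell m = \Theta(\ell)$. All factors $b^4$, $\Delta^{1.5}$ and powers of $m$ therefore become constants.

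Substituting into Theorem~\ref{thm:r_b_one_agent} gives a bound of $O\bigl(n r^{0.5}\log^{1.5}(r)\log n + n^2\log(n)/r\bigr)$. I would then optimise over $r$. With $r = n^{2/3}$ both terms are at most $n^{4/3}\log^{2.5} n$. Balancing the two terms more carefully, with $r \approx n^{2/3}/\log n$, gives $O(n^{4/3}\log^{2} n)$. This already matches the grid theorem with $m$ constant, up to logarithmic factors.

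The main obstacle is the exponent of the logarithm. A short calculation shows that no single choice of $r$ brings both terms down to $n^{4/3}\log^{1.5} n$. The first term forces $r \le n^{2/3}/\log^{2} n$. The second term forces $r \ge n^{2/3}/\log^{0.5} n$. The extra $\log n$ comes from the multi-to-single-agent conversion, Lemma~\ref{lem:multi_agent_to_single_agent}.

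To avoid this loss, I would try to exploit the fact that $b=m$ is constant and not pay the conversion globally. A single agent can simulate the $m$ agents of Theorem~\ref{thm:exploring_S_always_connected} inside one component. It does this by serving the boundary vertices one at a time, each time using the there-and-back structure of Lemma~\ref{lem:there_and_back_again} and Lemma~\ref{lem:spliting_X_by_go_and_return}. Since every epoch already returns each agent to its vertex of $S$, the simulation costs only a constant factor $m$, not a $\log n$ factor. Moves between consecutive boundary vertices inside a component, and between components, use Lemma~\ref{lem:connection_lemma} in $O(n)$ snapshots.

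This would give $O(n r^{0.5}\log^{1.5} r + n^2/r)$. Choosing $r = n^{2/3}$ then yields $O(n^{4/3}\log^{1.5} n)$. The remaining point to check is the inter-component travel term, about $n^2/r$. If it cannot be kept free of a $\log$ factor, I would take $r = n^{2/3}\log^{c} n$ for a suitable small $c$ to absorb it, accepting that the stated exponent $3/2$ may need adjustment.
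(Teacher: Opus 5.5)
Your first two paragraphs are correct. The paper gives no separate argument for this corollary; it presents it as the case $m=O(1)$ of the preceding grid theorem. That specialisation yields $O(n^{4/3}\log^{2.5}n)$. Your balancing $r\approx n^{2/3}/\log n$ in Theorem~\ref{thm:r_b_one_agent} improves this only to $O(n^{4/3}\log^{2}n)$. The exponent $3/2$ is what Theorem~\ref{thm:R_B_devision} gives at $r=n^{2/3}$ with $b=\Theta(m)$ agents, i.e.\ before Lemma~\ref{lem:multi_agent_to_single_agent} is applied. So the single-agent $\log^{3/2}$ claim rests entirely on your simulation step, and that step has a genuine gap. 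The term you flag as the remaining risk, inter-component travel, is harmless ($n^2/r=n^{4/3}$, with no logarithm). The problem is inside a component.

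\textbf{The gap.} In the proof of Theorem~\ref{thm:exploring_S_always_connected} only one agent moves per epoch, but you do not choose the boundary vertex $v\in S$ it starts from. It is produced by the pigeonhole argument of Lemma~\ref{lem:spliting_X_by_go_and_return}, and it depends on that epoch's snapshots and on the current unexplored set. (Lemma~\ref{lem:there_and_back_again} likewise only promises that \emph{some} vertex of $S$ works.) Hence the work attributed to $s_1$ is interleaved in time with the work attributed to the other boundary vertices. It cannot be batched as ``serve $s_1$, then $s_2$, \dots'', because later snapshots are different graphs.

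An agent parked at $s_1$ may make no progress for arbitrarily many epochs. For example, $s_1$ may be isolated, or attached only to explored vertices, in $G_t[C\cup S_C]$, where $C$ is the component and $S_C$ its boundary. To follow the active vertex, your agent must relocate up to once per epoch. Two boundary vertices of one component need not be connected inside $C\cup S_C$ at all. If every snapshot is the boustrophedon Hamiltonian path of the grid, $(1,a)$ and $(2,a)$ are joined only through column $n$, so one relocation can cost $\Theta(n)$ snapshots.

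The analysis allows $O(\sqrt{r}\log^{1.5}r)$ epochs per component, and nothing stops the active vertex changing every epoch. Over $\Theta(n/r)$ components this adds up to $n^{2}\log^{1.5}r/\sqrt{r}$ snapshots, roughly $n^{5/3}$ at $r=n^{2/3}$. This is the obstruction that Lemma~\ref{lem:multi_agent_to_single_agent} pays $k\log n$ to get around: a single agent can follow only one agent's walk per window of snapshots.

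Recovering $\log^{3/2}$ with one agent would need a new ingredient. Examples would be relocating between boundary vertices of a component in $O(r\,\mathrm{polylog}\,r)$ snapshots, or showing that a fixed boundary vertex is usable in a constant fraction of epochs. Neither follows from the lemmas you cite, and the example above shows the first fails for arbitrary pairs of boundary vertices. As it stands, your argument supports $O(n^{4/3}\log^{2}n)$ for a single agent, or $O(n^{4/3}\log^{3/2}n)$ with $\Theta(m)$ agents, not the stated single-agent bound.
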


\paragraph*{Interval Graphs}

Recall that an interval graph $G = (V, E)$ is a graph that can be defined by a set of $\vert V \vert = n$ intervals, $I_1, I_2, \dots, I_n$, where there is an edge between $v_i$ and $v_j$ if and only if $I_i$ and $I_j$ overlap. We note that the treewidth of an interval graph is equal to the size, $\chi$, of the largest clique, and thus we may immediately use Theorem \ref{thm:treewidth_exploration} to obtain a $O(n^{4/3}\chi \Delta^{1.5} \log^{1.5}\chi \log n + n \chi^{5}\log n)$ upper bound on the number of snapshots needed to explore always-connected temporal interval graphs. In this section, we show that we can explore such graphs faster if there are only a small number of large cliques. Indeed, letting $\chi_1, \chi_2, \dots, \chi_n$ be the cardinality of the maximal cliques in a given interval graph $G$, with $\chi_1 \geq \chi_2 \geq \dots \geq \chi_n$ and $k$ be the value such that $\sum_{i \in [1, k]} \chi_i \leq n^{2/3}$ and $\sum_{i \in [1, k + 1]} \chi_i > n^{2/3}$, we show that, any always connected temporal graph with the underlying graph $G$ can be explored by a single agent in $O(n^{4/3}\chi_k \Delta^{1.5} \log^{1.5}\chi_k \log n + n \chi_k^{5}\log n)$ snapshots.

\begin{lemma}
    Let $G = (V, E)$ be an interval graph with intervals $I_1, I_2, \dots, I_n$. Then, $G$ contains at most $n$ maximal cliques.
\end{lemma}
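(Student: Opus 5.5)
We need to prove that an interval graph on intervals $I_1, \dots, I_n$ has at most $n$ maximal cliques. The plan is to use the Helly property of intervals on the real line to charge each maximal clique to a distinct interval.

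First, let $C$ be any maximal clique. Its intervals pairwise intersect, so by the one-dimensional Helly property they share a common point. Concretely, $\max_{i \in C} \ell_i \leq \min_{i \in C} r_i$, where $I_i = [\ell_i, r_i]$: if some $\ell_i > r_j$, then $I_i$ and $I_j$ would be disjoint, contradicting $i, j \in C$.

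Next, define the charging point of $C$ as $p_C = \max_{i \in C} \ell_i$, and charge $C$ to an interval $I_{i_C}$ with $i \in C$ attaining this maximum. By the first step, $p_C$ lies in every interval of $C$.

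We then argue that the charging map is injective. Maximality gives $C = \{ j : p_C \in I_j \}$. Indeed, every interval containing $p_C$ meets all members of $C$ at $p_C$, so it could be added to $C$; maximality forces it to already be in $C$. Hence $C$ is determined by the point $p_C$, and $p_C$ is the left endpoint of $I_{i_C}$. So two distinct maximal cliques charged to the same interval would have the same point $p_C$, and therefore would be equal.

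This gives at most $n$ maximal cliques, one per interval. The main subtlety is making sure the Helly step and the identification $C = \{ j : p_C \in I_j \}$ are stated carefully. Both are routine once the intervals are taken to be closed; if they are open or half-open, the same argument goes through by perturbing endpoints without changing intersections. Ties among equal left endpoints cause no problem, since the charge depends only on the point $p_C$, and the argument needs only the left endpoints.
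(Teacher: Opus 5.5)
Your proof is correct, but it takes a different route from the paper's. The paper orders the intervals by left endpoint, assuming $l_j < l_{j+1}$ strictly, and inducts. It claims that adding $v_{k+1}$ creates at most one new maximal clique, namely the vertices whose intervals meet $I_{k+1}$. Two facts are left tacit there. First, every earlier interval meeting $I_{k+1}$ must contain $l_{k+1}$, so $N[v_{k+1}]$ is a clique containing every clique through $v_{k+1}$. Second, maximal cliques avoiding $v_{k+1}$ were already maximal before $v_{k+1}$ was added. This is the simplicial-vertex (perfect elimination ordering) argument, so the same structure works for any chordal graph.

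You instead give a direct injection. Via the one-dimensional Helly property, each maximal clique is exactly the set of intervals containing $p_C = \max_{i \in C} \ell_i$, so distinct maximal cliques yield distinct left endpoints. This avoids the induction and makes both tacit facts explicit. It also handles ties among left endpoints without the paper's strict-ordering assumption, and it gives the slightly sharper bound of at most the number of distinct left endpoints.

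The price is your separate remark about open or half-open intervals. It matters here, since the paper writes $I_j = (l_j, r_j)$, and for open intervals $p_C$ itself need not be a common point of $C$. Your perturbation to closed intervals with the same intersection pattern handles this: with finitely many intervals, shrinking each by a small enough amount preserves both intersections and disjointness.
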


\begin{proof}
    For notational ease, let $I_j = (l_j, r_j)$, and let $l_{j} < l_{j + 1}$, $\forall j \in [n - 1]$, informally, let the set of intervals be ordered such that $I_j$ begins before $I_{j + 1}$.
    
    We now prove our statement by induction.    
    First, for the graph $(\{v_1\}, \emptyset)$ we have exactly 1 maximal clique, namely the clique containing only $v_1$. Now, let $V_k = \{v_1, v_2, \dots, v_k\}$, and assume that there exists at most $k$ maximal cliques in $\left(V_k, E \cap (V_k \times V_k)\right)$. Now, consider the graph $\left(V_{k + 1}, E \cap (V_{k + 1} \times V_{k + 1})\right)$, where $V_{k + 1} = V_k \cup \{ v_{k + 1}\}$. Observe that the introduction of $v_{k + 1}$ induces at most $1$ additional clique, corresponding to the set of vertices with intervals that intersect with $I_{k + 1}$. Therefore, by induction, we have at most $n$ maximal cliques.
\end{proof}

\begin{lemma}
    Let $G = (V, E)$ be an interval graph with intervals $I_1, I_2, \dots, I_n$. Further, let $\chi_1, \chi_2, \dots, \chi_n$ be the cardinality of the maximal cliques in $G$, with $n^{2/3} > \chi_1 \geq \chi_2 \geq \dots \geq \chi_n$, and let $k$ be the value such that $\sum_{i \in [1, k]} \chi_i \leq n^{2/3}$ and $\sum_{i \in [1, k + 1]} \chi_i > n^{2/3}$. Then, there exists a proper $(4n^{2/3}, \chi_k)$ division of $G$.
\end{lemma}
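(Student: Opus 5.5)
We build the division by sweeping the interval representation from left to right and cutting at maximal cliques. The natural tool is the clique path: order the maximal cliques $K_1, K_2, \dots, K_p$ of $G$ by the left-to-right order of their witnessing points on the real line. By the preceding lemma $p \le n$. Every vertex occupies a contiguous run of these cliques, and any edge $(v_i,v_j)$ lies inside some $K_\ell$. So if we remove the vertices of a clique $K_\ell$ as separator vertices, no edge can join a vertex appearing only in cliques before $K_\ell$ to one appearing only in cliques after $K_\ell$. A set of cut cliques therefore splits $G[V\setminus S]$ into contiguous blocks.

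First we choose the cuts greedily. Walk along $K_1,\dots,K_p$ and accumulate the non-separator vertices seen so far. When the block would exceed $2n^{2/3}$ vertices, close it by cutting at a clique near that position. We want to cut at a clique of size at most $\chi_k$, where $\chi_k$ is the $k$-th largest clique size as in the statement. Only the cliques $K$ with $|K| > \chi_k$ are forbidden. By the choice of $k$, there are at most $k$ such cliques, and their sizes sum to at most $n^{2/3}$.

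Second, we check the boundary condition. Each block is bounded by at most two cut cliques. We should use the separator vertices adjacent to a block, rather than the full cut cliques, and we should cut only between blocks. With that convention, a block's boundary consists of vertices of $K_\ell$ that extend into it. We argue this is at most $\chi_k$, possibly up to a factor of $2$ for the two sides, which we absorb or fix by choosing $b=2\chi_k$ if necessary.

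Third, we check the size and strictness conditions. A forbidden window (a run of large cliques) contains at most $n^{2/3}$ vertices in total. So delaying a cut past it at most doubles the block size, giving blocks of size at most $4n^{2/3}$. Each block other than possibly the last has at least about $n^{2/3}$ vertices, so there are $O(n^{1/3}) = O(n/r)$ components with $r=4n^{2/3}$, as strictness requires.

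The main obstacle is the boundary bound. A separator vertex from one cut clique may have a long interval spanning several blocks, which could inflate the boundary count. Also, the separator set $S$ is the union of all cut cliques, so the total separator size matters for the downstream exploration, though not for the definition. I would handle the spanning issue by noting that any vertex adjacent to a block belongs to a clique bordering that block, namely the cut clique itself. Hence the boundary vertices of a block are contained in its at most two cut cliques, each of size at most $\chi_k$. This yields a $(4n^{2/3}, 2\chi_k)$-division, which matches the claimed $(4n^{2/3},\chi_k)$ up to a constant. If the exact constant is needed, I would cut at a single vertex set chosen as the intersection $K_\ell \cap K_{\ell+1}$, which is a minimal separator and is still of size at most $\chi_k$.
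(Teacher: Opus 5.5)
Your construction is the paper's argument in a more careful form. Both cut at maximal cliques of size at most $\chi_k$, spaced $\Theta(n^{2/3})$ vertices apart. Both rely on the fact that the cliques larger than $\chi_k$ have total size at most $n^{2/3}$, so every stretch of more than $n^{2/3}$ vertices contains an admissible cut. The paper uses fixed windows of $3n^{2/3}$ vertices in left-endpoint order, with one cut in the middle third of each, so each final part lies between the cuts of two consecutive windows and has at most $4n^{2/3}$ vertices. Your greedy sweep along the clique path is equivalent, and your size and strictness accounting is fine. Your witness-point separators and the convexity argument are sounder than the paper's version: you show that every $S$-vertex adjacent to a block lies in one of its two bounding cut cliques. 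The paper instead claims the cutting clique is a run $V[x,x+\chi_k]$ of consecutive vertices that cleanly separates the two sides, which ignores long intervals.

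The step that fails is your last one. Cutting at a minimal separator $K_\ell\cap K_{\ell+1}$ instead of a whole clique does not remove the factor $2$. That factor comes from every interior block having a separator on each side, not from the size of any single separator, so you still get up to $2(\chi_k-1)$ boundary vertices.

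No choice of cuts can do better. Take the square of a path ($I_j=[j,j+2]$). Every maximal clique is a triangle, so $\chi_k=3$. Yet every connected component $C$ of $G-S$ with $\min C\ge 3$ and $\max C\le n-2$ has the four neighbours $\min C-2$, $\min C-1$, $\max C+1$, $\max C+2$ in $S$. So $b=\chi_k$ is attainable only by degenerate divisions putting all but $O(n^{2/3})$ vertices into $S$. The definition as written even admits $S=V$, so the stated lemma holds only vacuously.

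The paper's proof has exactly the same oversight: each of its parts is bounded by the cut cliques of two consecutive windows, and it simply asserts $b=\chi_k$. What you actually proved, a strict $(O(n^{2/3}),2\chi_k)$-division, is the correct form of the lemma. State it that way rather than claiming to match $\chi_k$ up to a constant. The factor $2$ only changes constants in the corollary that uses it.
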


\begin{proof}
    Let $V[i,j] = \{v_i, v_{i + 1}, \dots, v_{j}\}$ and observe that, for any $i, j$ where $j \geq i + 3n^{2/3}$, there must be at least one maximal clique of size at most $\chi_k$ containing the vertices $V[x, x + \chi_k]$ such that $V[i,x - 1]$ and $V[x + \chi_k + 1, j]$ are disconnected components in $G$ and, further, $x - i \geq n^{2/3}$, and, $j - (x + \chi_k) \geq n^{2/3}$. Therefore, we can partition $G$ in an iterative manner, finding a strict $(2n^{2/3}, \chi_k)$-partition for each of the sets $V[3in^{2/3}, 3(i + 1)n^{2/3} - 1]$, forming a strict $(4n^{2/3}, \chi_k)$-partition of $G$, giving the proof. 
\end{proof}

\begin{corollary}
    Let $G = (V, E)$ be an interval graph with average degree $\Delta$, intervals $I_1, I_2, \dots, I_n$. Further, let $\chi_1, \chi_2, \dots, \chi_n$ be the cardinality of the maximal cliques in $G$, with $n^{2/3} > \chi_1 \geq \chi_2 \geq \dots \geq \chi_n$, and let $k$ be the value such that $\sum_{i \in [1, k]} \chi_i \leq n^{2/3}$ and $\sum_{i \in [1, k + 1]} \chi_i > n^{2/3}$. Then, any always-connected temporal graph with the underlying graph $G$ can be explored by a single agent in $O(n^{4/3}\chi_k^{4} \Delta^{1.5} \log^{2.5} n)$ snapshots.
\end{corollary}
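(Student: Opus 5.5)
The corollary should follow by plugging the division from the preceding lemma into Theorem~\ref{thm:r_b_one_agent}. Fix $G$ and $k$ as in the statement. The preceding lemma gives a strict $(4n^{2/3}, \chi_k)$-division of $G$. Its separator $S$ is a union of maximal cliques of size at most $\chi_k$ chosen at spacing roughly $3n^{2/3}$ along the interval ordering, so each component between consecutive separating cliques has at most $r = 4n^{2/3}$ vertices. There are $O(n/r) = O(n^{1/3})$ such components, and each component sees at most $b = \chi_k$ boundary vertices, namely those of the adjacent separating clique(s). A component sitting between two cliques sees up to $2\chi_k$ boundary vertices. I would therefore take $b = 2\chi_k$; this changes only constants.

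Next I substitute $r = \Theta(n^{2/3})$ and $b = \Theta(\chi_k)$ into Theorem~\ref{thm:r_b_one_agent}. The single-agent bound is $O\left(nr^{0.5} b^4 \Delta^{1.5}\log^{1.5}(r)\log(n) + n^2\log(n)/r\right)$. The first term becomes $O(n \cdot n^{1/3}\chi_k^4 \Delta^{1.5}\log^{1.5}(n)\log n) = O(n^{4/3}\chi_k^4\Delta^{1.5}\log^{2.5} n)$, using $\log r = O(\log n)$. The second term becomes $O(n^{2}\log n / n^{2/3}) = O(n^{4/3}\log n)$, which the first term absorbs since $\chi_k,\Delta \geq 1$. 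Together these give the claimed bound.

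The internal step that needs checking is that Theorem~\ref{thm:exploring_S_always_connected} applies inside each component, as Theorem~\ref{thm:R_B_devision} requires. Consider a component $C$ together with its boundary set $B$, $|B| \le b$, restricted to the snapshots. Every path in a snapshot from a vertex of $C$ to the rest of the graph must leave $C$ through $B$, because of the third condition of the division. Since each snapshot is connected, the temporal graph induced on $C \cup B$ is therefore always $B$-connected, and Corollary~\ref{col:total_exploration_m_agents} applies with $m \le b$ and with $n$ replaced by $r+b = O(n^{2/3})$.

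The main obstacle I expect is parameter bookkeeping rather than a new idea. First, $\Delta$ in the corollary is the global average degree, whereas the component-wise application uses the average degree of the component together with its boundary. Because boundary cliques can have high degree, this local average need not be bounded by the global one. I would handle this either by noting that the final bound can be stated with the maximum of these local averages, or by bounding each local average by $O(\Delta + \chi_k)$, and then checking the stated form. Second, I would verify that the hypothesis $\chi_1 < n^{2/3}$ in the statement guarantees that the separating cliques found in the preceding lemma really have size at most $\chi_k$, as that lemma asserts. If the boundary count doubles, the bound worsens only by a constant factor $2^4$.
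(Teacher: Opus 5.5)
Your proposal is correct and follows the paper's intended derivation; the paper states the corollary without a separate proof. You substitute the $(4n^{2/3},\chi_k)$-division from the preceding lemma into Theorem~\ref{thm:r_b_one_agent} with $r=\Theta(n^{2/3})$ and $b=\Theta(\chi_k)$, which gives $O(n^{4/3}\chi_k^4\Delta^{1.5}\log^{2.5}n)$ with the $O(n^{4/3}\log n)$ term absorbed, and your $b=2\chi_k$ correction costs only a constant. The local-versus-global average-degree issue you raise is already glossed over by the paper inside the proof of Theorem~\ref{thm:R_B_devision}, which uses the global $\Delta$ for every component, so citing Theorem~\ref{thm:r_b_one_agent} as stated suffices; but your fallback bound $O(\Delta+\chi_k)$ on the local average degree is false in general (a clique on $n^{2/3}/2$ vertices attached to a long path has $\chi_k=2$ and $\Delta=\Theta(n^{1/3})$, yet the component containing the clique has average degree $\Theta(n^{2/3})$).
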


\section{Conclusion}

In this paper, we have provided a new tool for exploring always $S$-connected temporal graphs, and applied this as a tool for exploring always-connected temporal graphs, in particular, temporal graphs with bounded treewidth. This still leaves several notable open questions. In one direction is the question of any lower bounds on the number of snapshots needed to explore a temporal graph with bounded tree width, in particular, of constant treewidth. At present, the strongest known lower bounds require nodes of large treewidth, making our approach already inefficient.

In the other direction is the question of exploring temporal graphs belonging to strict graph classes, such as planar graphs or $n \times m$ grids. While we have provided a bound on the round complexity for always-connected temporal grids where each vertex is missing at most one edge per snapshot of $O(n m^2)$, it is still very open as to whether this can be improved. We conjecture that such graphs can be explored by one agent in $O(nm \log^c nm)$ snapshots, for some constant $c$.

\bibliography{bib}

@InProceedings{Adamson2022,
  author       = {Adamson, Duncan and Gusev, Vladimir V and Malyshev, Dmitriy and Zamaraev, Viktor},
  booktitle    = {1st Symposium on Algorithmic Foundations of Dynamic Networks (SAND 2022)},
  title        = {Faster exploration of some temporal graphs},
  year         = {2022},
  organization = {Schloss Dagstuhl--Leibniz-Zentrum f{\"u}r Informatik},
  pages        = {5--1},
}

@phdthesis{taghian2020exploring,
  title={Exploring temporal cycles and grids},
  author={Taghian Alamouti, Shadi},
  year={2020},
  school={Concordia University}
}

@Article{Akrida2021,
  author    = {Akrida, Eleni C and Mertzios, George B and Spirakis, Paul G and Raptopoulos, Christoforos},
  journal   = {Journal of Computer and System Sciences},
  title     = {The temporal explorer who returns to the base},
  year      = {2021},
  pages     = {179--193},
  volume    = {120},
  publisher = {Elsevier},
}

@inproceedings{meeks2022reducing,
  title={Reducing Reachability in Temporal Graphs: Towards a More Realistic Model of Real-World Spreading Processes},
  author={Meeks, Kitty},
  booktitle={Conference on Computability in Europe},
  pages={186--195},
  year={2022},
  organization={Springer}
}

@inproceedings{dogeas2023exploiting,
  author       = {Konstantinos Dogeas and
                  Thomas Erlebach and
                  Frank Kammer and
                  Johannes Meintrup and
                  William K. Moses Jr.},
  title        = {Exploiting Automorphisms of Temporal Graphs for Fast Exploration and
                  Rendezvous},
  booktitle    = {51st International Colloquium on Automata, Languages, and Programming,
                  {ICALP} 2024, July 8-12, 2024, Tallinn, Estonia},
  series       = {LIPIcs},
  volume       = {297},
  pages        = {55:1--55:18},
  year         = {2024}
}

@InProceedings{Arrighi2023,
  author       = {Arrighi, Emmanuel and Fomin, Fedor V and Golovach, Petr A and Wolf, Petra},
  booktitle    = {18th International Symposium on Parameterized and Exact Computation (IPEC 2023)},
  title        = {Kernelizing Temporal Exploration Problems},
  year         = {2023},
  organization = {Schloss Dagstuhl--Leibniz-Zentrum f{\"u}r Informatik},
  pages        = {1--1},
}

@article{michail2016traveling,
  title={Traveling salesman problems in temporal graphs},
  author={Michail, Othon and Spirakis, Paul G},
  journal={Theoretical Computer Science},
  volume={634},
  pages={1--23},
  year={2016},
  publisher={Elsevier}
}

@Misc{Baguley2025,
  author        = {Samuel Baguley and Andreas Göbel and Nicolas Klodt and George Skretas and John Sylvester and Viktor Zamaraev},
  title         = {Temporal Exploration of Random Spanning Tree Models},
  year          = {2025},
  archiveprefix = {arXiv},
  eprint        = {2508.03361},
  primaryclass  = {cs.DM},
  url           = {https://arxiv.org/abs/2508.03361},
}

@Article{Bastide2025,
  author  = {Bastide, Paul and Groenland, Carla and Michel, Lukas and Rambaud, Cl{\'e}ment},
  journal = {arXiv preprint arXiv:2511.22604},
  title   = {Improved exploration of temporal graphs},
  year    = {2025},
}

@Article{Enright2021,
  author    = {Enright, Jessica and Meeks, Kitty and Mertzios, George B and Zamaraev, Viktor},
  journal   = {Journal of Computer and System Sciences},
  title     = {Deleting edges to restrict the size of an epidemic in temporal networks},
  year      = {2021},
  pages     = {60--77},
  volume    = {119},
  publisher = {Elsevier},
}

@Article{erlebach2022exploration,
  author    = {Erlebach, Thomas and Spooner, Jakob T},
  journal   = {Acta Informatica},
  title     = {Exploration of k-edge-deficient temporal graphs},
  year      = {2022},
  number    = {4},
  pages     = {387--407},
  volume    = {59},
  publisher = {Springer},
}

@Article{erlebach2021temporal,
  author    = {Erlebach, Thomas and Hoffmann, Michael and Kammer, Frank},
  journal   = {Journal of Computer and System Sciences},
  title     = {On temporal graph exploration},
  year      = {2021},
  pages     = {1--18},
  volume    = {119},
  publisher = {Elsevier},
}

@inproceedings{erlebach2019two,
  title={Two moves per time step make a difference},
  author={Erlebach, Thomas and Kammer, Frank and Luo, Kelin and Sajenko, Andrej and Spooner, Jakob T},
  booktitle={46th International Colloquium on Automata, Languages, and Programming (ICALP 2019)},
  year={2019}
}

@article{erlebach2023parameterised,
  title={Parameterised temporal exploration problems},
  author={Erlebach, Thomas and Spooner, Jakob T},
  journal={Journal of Computer and System Sciences},
  volume={135},
  pages={73--88},
  year={2023},
  publisher={Elsevier}
}

@article{bodlaender2019exploring,
  title={On exploring always-connected temporal graphs of small pathwidth},
  author={Bodlaender, Hans L and van der Zanden, Tom C},
  journal={Information Processing Letters},
  volume={142},
  pages={68--71},
  year={2019},
  publisher={Elsevier}
}

@Article{Kutner2025a,
  author  = {Kutner, David C and Sommer, Anouk},
  journal = {arXiv preprint arXiv:2501.18987},
  title   = {Better late, then? The hardness of choosing delays to meet passenger demands in temporal graphs},
  year    = {2025},
}

@article{DeligkasP22,
  author    = {Argyrios Deligkas and
               Igor Potapov},
  title     = {Optimizing reachability sets in temporal graphs by delaying},
  journal   = {Inf. Comput.},
  volume    = {285},
  number    = {Part},
  pages     = {104890},
  year      = {2022}
}

@article{ruget2021multi,
  title={Multi-species temporal network of livestock movements for disease spread},
  author={Ruget, Anne-Sophie and Rossi, Gianluigi and Pepler, P Theo and Beaun{\'e}e, Ga{\"e}l and Banks, Christopher J and Enright, Jessica and Kao, Rowland R},
  journal={Applied Network Science},
  volume={6},
  pages={1--20},
  year={2021},
  publisher={Springer}
}

\end{document}